\documentclass[12pt]{article}
\usepackage{amsmath, amsfonts, amssymb, amsthm, geometry, graphicx, arydshln, umoline, subfig, newtxtext, newtxmath, color, comment, soul, enumerate, bm,float,multirow,booktabs,varwidth}
\usepackage{tikz}
\usetikzlibrary{matrix, positioning,arrows.meta}
\usepackage{appendix}
\usepackage[colorlinks=true,citecolor=blue]{hyperref}
\usepackage{natbib}
\usepackage{booktabs}
\usepackage[table]{xcolor}
\def\BibTeX{{\rm B\kern-.05em{\sc i\kern-.025em b}\kern-.08em
    T\kern-.1667em\lower.7ex\hbox{E}\kern-.125emX}}
\usepackage{makecell}
\usepackage{bbm}

\numberwithin{equation}{section} %%%G (added)

\newcommand{\Delete}[1]{}   %%%F (added)

\theoremstyle{definition}
\newtheorem{theorem}{Theorem}%[section]

\newtheorem{corollary}{Corollary}

\newtheorem{example}{Example}

\newtheorem{lemma}{Lemma}

\newtheorem{proposition}{Proposition}

\begin{document}

\title{A New Method for Finding the Schulze Winner Set}
\author{
Satoru Fujishige\thanks{Research Institute for Mathematical Sciences, Kyoto University, Kyoto 606-8502, Japan. Email: fujishig@kurims.kyoto-u.ac.jp}
%%F \and
\qquad
Leo Goto\thanks{Undergraduate School of Management, Department of Business Economics, Tokyo University of Science, 1-11-2, Fujimi, Chiyoda-ku, Tokyo, 102-0071, Japan. Email: leogotodeb@gmail.com}
%%F \and
\qquad
Satoshi Nakada\thanks{School of Management, Department of Business Economics, Tokyo University of Science, 1-11-2, Fujimi, Chiyoda-ku, Tokyo, 102-0071, Japan. Email: snakada@rs.tus.ac.jp}
}
%\date{\today}
\date{June 4, 2026}
\maketitle

\begin{abstract}
We propose a new voting algorithm based on the pairwise majority-comparison matrix derived from voters' preference profiles. 
We show that this algorithm induces exactly the winner set of the Schulze rule \citep{Schulze1997}. Our algorithm successively eliminates 
%%F the weakest candidates in terms of 
weaker candidates in terms of 
all-pairs comparisons, thereby reflecting a dual spirit to Condorcet's original idea of splitting preference cycles \citep{Condorcet1785}. 
We further show that the direct sum of 
%%F the winner sets 
the survival sets 
obtained at each elimination round coincides with the Schwartz set \citep{Schwartz1972}. 
These two equivalence results provide a formal mathematical foundation for the ``folklore'' relationship between the Schulze winner set and the Schwartz set, as well as a new Condorcetian interpretation of the Schulze winner set.

\noindent\textit{JEL classification}: D71.
\newline\noindent\textit{Keywords}: Schulze rule, Schwartz set, Condorcet winner, Network optimization.
\end{abstract}

\section{Introduction}\label{sec:intro}

The problem of aggregating individuals' preferences to determine a socially desirable alternative has been studied for centuries and constitutes one of the central themes of social choice theory. 
When there are only two alternatives, the problem is comparatively simple: the majority rule is widely regarded as the most natural democratic procedure, and its axiomatic characterization was established by \citet{May1952}. 
However, once there are three or more alternatives, pairwise majority preferences may fail to satisfy transitivity, thereby giving rise to the phenomenon known as the \textit{Condorcet cycle}.

To address this difficulty, \citet{Condorcet1785} proposed a remarkable procedure based on pairwise majority comparisons. 
He first considered the complete system of pairwise majority propositions and observed that this system may become ``inconsistent'' due to cyclic majority relations. 
Condorcet then suggested successively eliminating those propositions supported by the weakest pluralities until a consistent system remains.\footnote{\citet{Condorcet1785} describes this procedure as follows:
\begin{quote}
One would obtain in this way the system of propositions formed by the plurality among the possible systems. [...]
If the system formed by the propositions which have obtained the plurality is one of the possible systems, the candidate whom this system places first should be regarded as elected. 
If, on the contrary, this system is one of the impossible systems, one must successively reject from this impossible system the propositions which have the smaller plurality, and accept the resulting system of the remaining propositions.
\citep[pp.~lx--lxi, lxvii--lxix]{Condorcet1785}
\end{quote}
}
This idea naturally leads to the alternative, \textit{Condorcet winner}, namely an alternative that defeats every other alternative in pairwise majority comparison. 
A voting rule that always selects the Condorcet winner whenever it exists is called a \textit{Condorcet-consistent rule}.

Among the well-used Condorcet-consistent rules is the \textit{Schulze rule}, introduced by \citet{Schulze1997}.\footnote{The rule has been adopted in practice by several organizations and open-source communities, including the Debian Project, Software in the Public Interest, and Wikimedia-related elections.} 
The Schulze rule evaluates how strongly an alternative $x$ defeats another alternative $y$ through paths in the pairwise majority graph. 
More precisely, for every path from $x$ to $y$, one considers the weakest pairwise support of alternative $z$ against another alternative $w$ along that path, and the strength of $x$ against $y$ is defined as the maximum such value over all paths from $x$ to $y$. 
The winner is then determined according to these strongest-path comparisons. 
From a graph-theoretic viewpoint, computing the Schulze winner set is an instance of the all-pairs bottleneck path problem, which can be solved by classical algorithms such as the Floyd--Warshall algorithm \citep{Floyd1962, Warshall1962}.

We propose a new algorithm inspired by the original spirit of Condorcet's successive elimination procedure. 
Our algorithm also relies on pairwise majority comparisons, but instead of propagating strongest paths, it successively eliminates 
%%F the weakest pairwise relations between alternatives. 
weaker pairwise relations between alternatives. 
Intuitively, if an alternative $x$ is insufficiently supported against some alternative, then the possibility that $x$ should ultimately win is discarded. 
While Condorcet's original proposal successively eliminated pairwise relations with the smallest plurality margins, our algorithm successively eliminates the weaker pairwise dominance relations induced by the majority graph. 
We call this algorithm \textit{SuccessiveDicut}.

Our main result shows that 
%%F the winner set 
the survival set  %%F
produced by SuccessiveDicut coincides exactly with the Schulze winner set. 
This equivalence provides a new interpretation of the Schulze rule from the dual perspective of Condorcet's original idea of successively eliminating weak majority defeats.
Furthermore, each stage of our elimination procedure induces an equivalence class of alternatives that reflects the hierarchical structure of the underlying pairwise majority relation. 
We show that the direct sum of these equivalence classes in each round coincides with the Schwartz set \citep{Schwartz1972}. 
The relationship between the Schulze rule and the Schwartz set has long been informally discussed in the literature, and \citet{Schulze2003} explains this connection through the so-called Schwartz set heuristic. 
However, to the best of our knowledge, a formal structural characterization has not previously been established. 
Therefore, our results provide a mathematical foundation for the well-known folklore relationship between the Schulze winner set and the Schwartz set.

The rest of this paper is organized as follows.
In Section~\ref{sec: Schulze}, we introduce the basic setup and review the definition of the Schulze rule. 
In Section~\ref{sec:successiveDicut}, we present our algorithm and establish the main equivalence results. 
Section~\ref{sec: KM} discusses the relationship between our method and the Kemeny--Young method.
Further discussions are provided in Section~\ref{sec:discussion}. 
Omitted proofs are collected in the appendices.

\subsection*{Related literature}
Our paper contributes to the literature on designing Condorcet-consistent voting rules. 
One influential interpretation of Condorcet's proposal is due to \citet{Young1988}, who gives a maximum-likelihood interpretation of Condorcet's method and connects it to the problem of identifying the ranking most likely to reflect the voters' collective judgment. 
He further shows that this maximum-likelihood approach coincides with the rule proposed by \citet{Kemeny1959}, in which the social ranking is determined by minimizing the sum of Kendall-tau distances between the social ranking and each individual's preference.\footnote{See also \citet{young1978consistent} for an axiomatic characterization of the Kemeny--Young rule.}

In contrast to \citet{Young1988}'s interpretation, our approach reconstructs Condorcet's proposal as a procedure that iteratively eliminates weak majority relations until a stable acyclic structure remains. 
Mathematically, whereas the Kemeny--Young rule can be formulated as an instance of the linear ordering problem on a weighted graph, our algorithm provides a complementary perspective based on the all-pairs bottleneck path problem.

This viewpoint is closely related to the notion of the Schwartz set introduced by \citet{Schwartz1972}, which was proposed as a way to rationalize binary relations that may fail to admit a weak order consistent with all pairwise comparisons. 
Our method differs from the Schwartz-set approach in two respects: first, it uses the bidirectional pairwise-comparison network rather than only the majority relation, and second, it successively eliminates 
%%F the weakest pairwise dominance relations. 
weaker pairwise dominance relations. 
Eventually, the direct sum of the equivalence classes 
%%F (added)
(survival sets)
generated at each stage of the algorithm coincides with the Schwartz set.

Our equivalence results are also related to the computational literature on graph-theoretic optimization problems. Computing the Schulze winner set is an instance of the all-pairs bottleneck path problem, a classical graph problem in which one seeks, for every pair of vertices, a path maximizing the minimum edge capacity along the path. \citet{vassilevska2009all} established a close connection between this problem and the $(\max,\min)$ matrix product and obtained the first truly subcubic algorithm for dense directed graphs. 
This line of research was further developed by \citet{duan2009fast}. 
More recently, \citet{cornect2024gromov} related the bottleneck path problem on undirected graphs to the computation of Gromov approximating trees for finite metric spaces. 
As discussed in Section~\ref{sec:computational}, the currently best randomized expected-time algorithm for computing the Schulze winner set is due to \citet{SVX2021}, which runs in ${\rm O}(m^2 \log^4 m)$ time.

\section{The Schulze rule}\label{sec: Schulze}
Let $N=\{1,2,\ldots,n\}$ be the finite 
nonempty %%F (added)
set of voters, and let $X=\{x_1,x_2,\ldots,x_m\}$ be a finite nonempty set of alternatives (or candidates).
We assume that each voter $i\in N$ has a preference relation, namely a 
%%F linear ordering 
linear order 
$P_i$ on $X$.
%%F (added below)
That is, $xP_iy$ for $x,y\in X$ means that voter $i$ prefers $x$ to $y$. %%F
Let $\mathcal{P}$ be the set of all 
%%F linear orderings on $X$, 
linear orders on $X$, %%F 
and let $\mathcal{P}^n=\prod_{i\in N}\mathcal{P}$ be the set of all preference profiles.

For each $P_i\in\mathcal{P}$, we define a ranking function $r_i:X\rightarrow\{1,2,\ldots,m\}$
such that
\[
r_i(x)=m-\left|\{y\in X \mid xP_i y\}\right|,
\qquad \forall x\in X.
\]
Since there is a one-to-one correspondence between $P_i \in \mathcal{P}$ and $r_i$, hereafter we identify a ranking function as a preference of a voter.
For every distinct pair of alternatives $(x,y) \in X \times X$, define the function $\gamma:X\times X\rightarrow \mathbb{Z}_{\ge 0}$
by
\[
\gamma(x,y)
=
\left|
\left\{
i\in N \mid r_i(x)<r_i(y)
\right\}
\right|.
\]

The function $\gamma$ is called the \emph{pairwise-record function}, and the number $\gamma(x,y)$ is referred to as the \emph{pairwise record} of $x$ versus $y$, namely the number of voters who prefer $x$ to $y$.
Note that
\begin{equation}\label{eq:v1a}
\gamma(x,y)+\gamma(y,x)=n
\end{equation}
for any $(x,y)\in X\times X$ with $x\neq y$.
An alternative $x\in X$ is called the {\it Condorcet winner (resp. loser)} if for every $y \in X\setminus\{x\}$ we have $\gamma(x,y)>\gamma(y,x)$ (resp. $\gamma(x,y)>\gamma(y,x)$). 
By definition, the Condorcet winner (resp. loser) is unique if it exists.
%Similarly, an alternative $x \in X$ the {\it Condorcet loser} if for every $y \in X\setminus\{x\}$ we have $\gamma(x,y)<\gamma(y,x)$, which is unique if it exists.

Let
\[
A_{\gamma}
=
\{
(x,y)\in X\times X
\mid
\gamma(x,y)>0
\}
\]
be the set of all ordered pairs whose pairwise record is positive.
The pairwise-record function $\gamma$ induces a weighted directed graph, referred to as the \emph{pairwise-record graph}, $G_{\gamma}=(X,A_{\gamma})$, where $X$ is the set of vertices and $A_{\gamma}$ is the set of arcs.\footnote{See the Appendix for graph-theoretic preliminaries.}
We say that the weighted graph $\mathcal{N}=(G_{\gamma}, \gamma)$ is a {\it pairwise comparison network}.
It is remarkable that the {\it majority graph} $G_{\mu}=(X, B_{\mu})$ is a subgraph of the corresponding pairwise record graph $G_{\gamma}$ by removing all arcs $(x,y)\in A_{\gamma}$ with $\gamma(x,y)<\gamma(y,x)$, that is, $B_{\mu}=\{(x,y) \in X \times X \mid \mu(x,y)>0\}$, where $\mu$ is the {\it majority margin} $\mu(x,y)\equiv\gamma(x,y)-\gamma(y,x)$.
We also say that the weighted graph $\mathcal{N}_M=(G_{\mu}, \mu)$ is a {\it majority margin network}.

For any $(x,y)\in X\times X$, define the \emph{strength} of $x$ against $y$ by
\[
\tau^\gamma_{G_{\gamma}}(x,y)
=
\begin{cases}
\max_{\pi \in \Pi(x,y)}\min_{(w,z)\in \pi}\gamma(w,z)
~\text{if}~\Pi(x,y) \neq \emptyset,\\
0~\text{otherwise},
\end{cases}
\]
where $\Pi(x,y)$ denotes the set of all directed paths from $x$ to $y$, and $(x,y)\in \pi$ means that the arc $(x,y)$ lies on the path $\pi$.\footnote{$\pi$ is a path from $x$ to $y$ if there exists $z_1, \ldots z_l \in X$ such that $z_1=x, z_l=y$ and $\pi=\{(z_k,z_{k+1}) \in A_{\gamma} \mid k=1, \ldots, l-1\}$.}
Moreover, $\tau^\mu_{G_\mu}(x,y)$ is defined analogously.
The {\it Schulze winner set} \citep{Schulze1997}, denoted by $F^{Schulze}(\mathcal{N})$, is the set of all candidates $x$ that is not beaten by any other $y \in X \setminus\{x\}$ with respect to $\tau^{\gamma}_{G_{\gamma}}$:
\[
F^{Schulze}(\mathcal{N})=\bigl\{x \in X \mid \tau^{\gamma}_{G_{\gamma}}(x,y) \ge \tau^{\gamma}_{G_{\gamma}}(y,x), \forall y \in X \setminus \{x\} \bigr\}.
\]

Mathematically, the problem to find the Schulze winner set corresponds to the widest path, or all-pairs bottleneck path problem in graph theory and we can apply a version of Floyd--Warshall Algorithm \citep{Floyd1962,Warshall1962} to solve the problem.
While it is not self-evident that the Schulze winner set is non-empty, the well-definedness of the Schulze rule is proven in previous literature \citep{Schulze2003, Schulze2011, Schulze2025}.
Hereafter, the non-emptyness of the Schulze winner set is taken as given.
As it is straightforwardly verified, \citet{Schulze1997} shows that the Schulze winner set always includes the Condorcet winner, if it exists.

Although we have defined the Schulze winner set in terms of the pairwise-comparison network, another way to define the Schulze winner set is to define over the majority margin network $\mathcal{N}_M$ as follows:
\[
F^{Schulze}(\mathcal{N}_M)=\bigl\{x \in X \mid \tau^{\mu}_{G_{\mu}}(x,y) \ge \tau^{\mu}_{G_{\mu}}(y,x), \forall y \in X \setminus \{x\} \bigr\}.
\]
Interestingly, 
%%F the two Schulze winner set 
the two Schulze winner sets  %%F 
can be proven to be equivalent.

\begin{proposition}\label{prop:schulze_mar}
For any pairwise comparison network $\mathcal{N}=(G_{\gamma}, \gamma)$ and the corresponding majority margin network $\mathcal{N}_M=(G_{\mu}, \mu)$, we have
\[
F^{Schulze}(\mathcal{N})=F^{Schulze}(\mathcal{N}_M).
\]
\end{proposition}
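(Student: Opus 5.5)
The plan is to compare the two strength functions pair by pair. I will show that, for every ordered pair $(x,y)$ with $x\neq y$, the comparison between $\tau^{\gamma}_{G_\gamma}(x,y)$ and $\tau^{\gamma}_{G_\gamma}(y,x)$ has the same outcome ($>$, $=$, $<$) as the comparison between $\tau^{\mu}_{G_\mu}(x,y)$ and $\tau^{\mu}_{G_\mu}(y,x)$. The two winner sets are defined by exactly these comparisons, so the proposition follows immediately.

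\emph{Step 1.} By \eqref{eq:v1a} we have $\mu(w,z)=2\gamma(w,z)-n$, that is, $\gamma(w,z)=(n+\mu(w,z))/2$. Hence the arcs of $G_\gamma$ with $\gamma(w,z)>n/2$ are exactly the arcs of $B_\mu$, and on them $\gamma$ is a strictly increasing affine function of $\mu$.

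\emph{Step 2 (key lemma).} I will prove that $\tau^{\gamma}_{G_\gamma}(x,y)>n/2$ if and only if $\Pi(x,y)\neq\emptyset$ in $G_\mu$, and that in this case $\tau^{\gamma}_{G_\gamma}(x,y)=(n+\tau^{\mu}_{G_\mu}(x,y))/2$.
\begin{itemize}
\item Suppose a path $\pi$ in $G_\gamma$ has bottleneck greater than $n/2$. Then every arc of $\pi$ lies in $B_\mu$, and its $\mu$-bottleneck corresponds to its $\gamma$-bottleneck under the affine map of Step 1.
\item Conversely, every path in $G_\mu$ is a path in $G_\gamma$ with $\gamma$-bottleneck greater than $n/2$.
\end{itemize}
Because the map is monotone, the maximum over paths is preserved. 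Every path whose bottleneck is at most $n/2$ cannot affect a maximum that already exceeds $n/2$, so these paths can be ignored. Also, $\tau^{\mu}_{G_\mu}(x,y)>0$ exactly when a $G_\mu$-path exists, since all $\mu$-weights on $B_\mu$ are positive.

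\emph{Step 3 (case analysis on whether majority paths exist).}
\begin{enumerate}
\item If majority paths exist in both directions, both strength pairs are related by the same strictly increasing map, so the comparisons agree.
\item If a path exists from $x$ to $y$ but not from $y$ to $x$, then $\tau^{\mu}_{G_\mu}(x,y)>0=\tau^{\mu}_{G_\mu}(y,x)$. By Step 2, $\tau^{\gamma}_{G_\gamma}(x,y)>n/2\ge\tau^{\gamma}_{G_\gamma}(y,x)$, so again the comparisons agree.
\item If no majority path exists in either direction, then in particular neither $(x,y)$ nor $(y,x)$ is in $B_\mu$. So $\gamma(x,y)=\gamma(y,x)=n/2$, which forces $n$ to be even. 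Both $\mu$-strengths are $0$. In $G_\gamma$ the direct arcs give $\tau^{\gamma}_{G_\gamma}(x,y)\ge n/2$ and $\tau^{\gamma}_{G_\gamma}(y,x)\ge n/2$, and Step 2 gives both $\le n/2$. Hence both equal $n/2$, and both comparisons are ties.
\end{enumerate}

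The main obstacle I expect is Step 2's handling of paths in $G_\gamma$ that use minority or tied arcs. These are exactly what could make $G_\gamma$ differ from $G_\mu$. The threshold $n/2$ resolves this: such paths have bottleneck at most $n/2$, while the direct arc always supplies strength at least $n/2$ in at least one direction. I will be careful at the boundary case $\gamma=n/2$ that arises only for even $n$, which is why case 3 is isolated.
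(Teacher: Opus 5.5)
Your proposal is correct and follows essentially the paper's route: a pair-by-pair comparison of the two strengths via $\mu=2\gamma-n$, with the threshold $n/2$ separating majority paths from paths that use minority or tied arcs, and the case $\gamma(x,y)=\gamma(y,x)=n/2$ treated separately. The paper splits on whether $\max\{\tau^{\gamma}_{G_\gamma}(x,y),\tau^{\gamma}_{G_\gamma}(y,x)\}$ exceeds $n/2$ and uses the identity $\tau^{\gamma}_{G_\gamma}(x,y)=(n+\tau^{\mu}_{G_\mu}(x,y))/2$ for majority-connected pairs without proof, whereas your Step 2 proves that identity explicitly.
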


\section{A new algorithm:  \textit{SuccessiveDicut}}\label{sec:successiveDicut}
We introduce a new algorithm, \emph{SuccessiveDicut}, and show that this algorithm eventually induces the Schulze winner set.
Let $\mathcal{N}=(G_{\gamma}, \gamma)$ be an input for the algorithm.
The procedure, which is formalized by {\sf Procedure} {\bf SuccessiveDicut}$(\mathcal{N})$, is as follows.
We compute the minimum $\gamma^*>0$ in {\bf Step 2a} and remove the arcs that attain the minimum. 
Then we extract the poset structure within the current graph $H=(U,B)$, pick up maximal strongly connected components, discard the other part of the current graph $H$, and update the current $H=(U,B)$.
%%F (Added)
\footnote{See Appendix~\ref{appendix: graph} for the definition of {\it maximal} strongly connected 
component.}  %%F 
We repeat this process until we get $B=\emptyset$.
Let $F^{SD}(\mathcal{N})$ be the output of the algorithm.

\noindent\\
----------------------------------------------------------------------------------------------------\\
{\sf Procedure} {\bf SuccessiveDicut}$(\mathcal{N})$  \\ 
----------------------------------------------------------------------------------------------------\\
{\bf Input}: The pairwise comparison network $\mathcal{N}=(G_{\gamma}, \gamma)$.\\
{\bf Output}: The winner set $F^{SD}(\mathcal{N})$ of $\mathcal{N}$.
\smallskip\\
{\bf Step 1}: Let $H=(U,B)$ be the maximal strongly connected component 
%%F of $G=(V,A)$;\\
of $G_\gamma=(V,A)$;\\  %%F
{\bf Step 2}: While $B\neq\emptyset$, do the following:\\
\hspace*{1em} {\bf Step 2a}: Compute $\gamma^*=\min\{\gamma(a)\mid a\in B\}$;\\
\hspace*{2em} $B^*\gets \{a\in B\mid \gamma(a)=\gamma^*\}$;\\
\hspace*{2em} Delete arcs in $B^*$ from $H=(U,B)$;\\
\hspace*{2em} Update $B$ and denote the obtained graph by $H=(U,B)$ again;\\ 
\hspace*{1em} {\bf Step 2b}: Find all {\sl maximal} strongly connected 
components of $H$;\\
\hspace*{2em} Put the components as  $H_i=(U_i,B_i)$ $(i\in I)$;\\
\hspace*{2em} Let $H=(U,B)$ be the direct sum of $H_i=(U_i,B_i)$ $(i\in I)$;\footnote{Here, $U=\cup_{i\in I}U_i$ and $B=\cup_{i\in I}B_i$.}\\
 {\bf Step 3}: Return $F^{SD}(\mathcal{N})=U$; \\
----------------------------------------------------------------------------------------------------
\noindent\\

Note that the finally obtained graph $H$ consists of isolated vertices,
which form the winner set $F^{SD}(\mathcal{N})$.
The following example illustrates how the algorithm works.

\begin{example}
Consider an example from \citet{doring2026river}.
%%F (Added below)
The margin graph is not strongly connected. We consider $n=50$ with which
we defined $\gamma$ from the majority margin $\mu$ by 
$\gamma(u,v)=(\mu(u,v)+n)/2$. %%F (as in Remark~\ref{rem:MarginGraph}).
%%F  
Performing {\sf{Procedure}} \textbf{SuccessiveDicut}, we obtain (1) in Figure~\ref{Fig_doring} when $\gamma^*=24$.
%%F Correct Figure 1: The arc (f,b) in (1) and (2) should be reversed as (b,f) %%%%%%%%%%%%%%%
The later behavior of {\sf{Procedure}} \textbf{SuccessiveDicut} is shown in Figure~\ref{Fig_doring}, where it proceeds with (2)$\rightarrow$(3)$\rightarrow$(4)$\rightarrow$(5)$\rightarrow$(6).
%%F (Added below)
The broken lines appearing in (2) and (6) represent directed cuts (dicuts).
%%F 
The winner set $F^{SD}$ in this example is $\{b\}$, which corresponds to the Schulze winner.

%================================================
% Figure 4
%================================================

\tikzset{
  E/.style={
    -{Latex[length=1.8mm,width=1.35mm]},
    line width=.55pt,
    shorten >=10.5pt,
    shorten <=4pt
    },
  N/.style={
    font=\scriptsize,
    fill=white,
    inner sep=.7pt
  },
  Circ/.style={
    font=\scriptsize,
    draw,
    circle,
    fill=white,
    inner sep=.5pt
  },
  V/.style={
    font=\Large,
    fill=white,
    inner sep=2pt
  },
}

\def\CoordsF{%
  \coordinate (f) at (0,1);
  \coordinate (a) at (1.35,2.25);
  \coordinate (d) at (1.35,-.25);
  \coordinate (b) at (5.05,2.25);
  \coordinate (c) at (5.05,-.25);
  \coordinate (e) at (6.85,1);
}

\def\CoordsNoF{%
  \coordinate (a) at (0,2.25);
  \coordinate (d) at (0,-.25);
  \coordinate (b) at (3.95,2.25);
  \coordinate (c) at (3.95,-.25);
  \coordinate (e) at (5.85,1);
}

\def\VerticesF{%
  \node[V] at (f) {$f$};
  \node[V] at (a) {$a$};
  \node[V] at (d) {$d$};
  \node[V] at (b) {$b$};
  \node[V] at (c) {$c$};
  \node[V] at (e) {$e$};
}

\def\VerticesNoF{%
  \node[V] at (a) {$a$};
  \node[V] at (d) {$d$};
  \node[V] at (b) {$b$};
  \node[V] at (c) {$c$};
  \node[V] at (e) {$e$};
}

\def\EAB#1{%
  \draw[E] (a)--node[#1,above,pos=.50,yshift=-2pt] {$33$} (b);
}

\def\EAD#1{%
  \draw[E] (a)--node[#1,right,pos=.50,xshift=-3pt,yshift=17pt] {$36$} (d);
}

\def\EBC#1{%
  \draw[E] (b)--node[#1,right,pos=.46,xshift=-5pt,yshift=13pt] {$32$} (c);
}

\def\EDC#1{%
  \draw[E] (d)--node[#1,below,pos=.52,yshift=2pt] {$35$} (c);
}

\def\EBE#1{%
  \draw[E] (b)--node[#1,above,pos=.42,xshift=1pt,yshift=-2pt] {$38$} (e);
}

\def\ECE#1{%
  \draw[E] (c)--node[#1,below,pos=.56,xshift=-5pt,yshift=-2pt] {$39$} (e);
}

\def\EAE#1{%
  \draw[E] (a)--node[#1,above,pos=.76,xshift=1pt,yshift=-3pt] {$40$} (e);
}

\def\EED#1{%
  \draw[E] (e)--node[#1,above,pos=.50,xshift=12pt,yshift=-1pt] {$37$} (d);
}

\def\ECA#1{%
  \draw[E] (c)--node[#1,below,pos=.47,xshift=7pt,yshift=-3pt] {$34$} (a);
}

\def\EDB#1{%
  \draw[E] (d)--node[#1,above,pos=.60,xshift=2pt,yshift=-1pt] {$31$} (b);
}

\def\CoreAll{%
  \EAB{N}
  \EAD{N}
  \EBC{N}
  \EDC{N}
  \EBE{N}
  \ECE{N}
  \EAE{N}
  \EED{N}
  \ECA{N}
  \EDB{N}
}

\def\EdgesF{%
  \draw[E] (a) to[bend right=9]
    node[N,above,pos=.43,xshift=-2pt,yshift=3pt] {$26$} (f);

  \draw[E] (f) to[bend right=9]
    node[Circ,below,pos=.55,xshift=-1pt,yshift=2pt] {$24$} (a);

  \draw[E] (d)--node[N,below,pos=.48,xshift=-2pt,yshift=-3pt] {$27$} (f);

  \draw[E] (e)--node[N,above,pos=.48,xshift=-35pt,yshift=-2pt] {$30$} (f);

  \draw[E] (c)--node[N,below,pos=.55,xshift=-12pt,yshift=6pt] {$28$} (f);

  \draw[E] (b)--node[N,above,pos=.76,xshift=55pt,yshift=12pt] {$29$} (f);
}

\def\EdgesFNoFA{%
  \draw[E] (a) to[bend right=9]
    node[N,above,pos=.43,xshift=-2pt,yshift=3pt] {$26$} (f);

  \draw[E] (d)--node[N,below,pos=.48,xshift=-2pt,yshift=-3pt] {$27$} (f);

  \draw[E] (e)--node[N,above,pos=.48,xshift=-35pt,yshift=-2pt] {$30$} (f);

  \draw[E] (c)--node[N,below,pos=.55,xshift=-12pt,yshift=6pt] {$28$} (f);

  \draw[E] (b)--node[N,above,pos=.76,xshift=55pt,yshift=12pt] {$29$} (f);
}

\begin{figure}[!htbp]
\centering

\begin{tikzpicture}[x=.73cm,y=.85cm]

%-----------------------------
% (1)
%-----------------------------
\begin{scope}[shift={(0,9.0)}]
  \CoordsF
  \EdgesF
  \CoreAll
  \VerticesF
  \node[font=\Large] at (3.4,-1.05) {(1)};
\end{scope}

%-----------------------------
% (2)
%-----------------------------
\begin{scope}[shift={(9.8,9.0)}]
  \CoordsF

  \draw[dashed,red!80!black,line width=1.2pt] (.78,-.50)--(.78,2.55);

  \EdgesFNoFA
  \CoreAll
  \VerticesF

  \node[font=\Large] at (3.4,-1.05) {(2)};
\end{scope}

%-----------------------------
% (3)
%-----------------------------
\begin{scope}[shift={(.55,4.5)}]
  \CoordsNoF
  \EAB{N}
  \EAD{N}
  \EBC{N}
  \EDC{N}
  \EBE{N}
  \ECE{N}
  \EAE{N}
  \EED{N}
  \ECA{N}
  \EDB{Circ}
  \VerticesNoF
  \node[font=\Large] at (2.95,-1.05) {(3)};
\end{scope}

%-----------------------------
% (4)
%-----------------------------
\begin{scope}[shift={(10.35,4.5)}]
  \CoordsNoF
  \EAB{N}
  \EAD{N}
  \EBC{Circ}
  \EDC{N}
  \EBE{N}
  \ECE{N}
  \EAE{N}
  \EED{N}
  \ECA{N}
  \VerticesNoF
  \node[font=\Large] at (2.95,-1.05) {(4)};
\end{scope}

%-----------------------------
% (5)
%-----------------------------
\begin{scope}[shift={(.55,0)}]
  \CoordsNoF
  \EAB{Circ}
  \EAD{N}
  \EDC{N}
  \EBE{N}
  \ECE{N}
  \EAE{N}
  \EED{N}
  \ECA{N}
  \VerticesNoF
  \node[font=\Large] at (2.95,-1.05) {(5)};
\end{scope}

%-----------------------------
% (6)
%-----------------------------
\begin{scope}[shift={(10.35,0)}]
  \CoordsNoF
  \EAD{N}
  \EDC{N}
  \EBE{N}
  \ECE{N}
  \EAE{N}
  \EED{N}
  \ECA{N}
  \draw[dashed,red!80!black,line width=1.2pt] (4.22,1.5)--(4.82,2.42);
  \VerticesNoF
  \node[font=\Large] at (2.95,-1.05) {(6)};
\end{scope}

\end{tikzpicture}

\caption{An example from \citet{doring2026river}.}\label{Fig_doring}
\end{figure}
\end{example}

Recall that every margin graph $G_{\mu}=(X,B_{\mu})$ is a subgraph of a corresponding pairwise-record graph $G_{\gamma}=(X,A_{\gamma})$. 
If the margin graph $G_{\mu}$ is strongly 
connected, then during the execution of our {\sf Procedure} 
{\bf SuccessiveDicut}, $G_{\mu}$ appears when value $\gamma^*$ becomes large enough but not larger than $n/2$. 
Indeed, for any $(x,y) \in B_{\mu}$, we have $\mu(x,y)=\gamma(x,y)-\gamma(y,x)>0$ and $\gamma(x,y)\ge n/2 \ge 
\gamma(y,x)$. 
Hence when value $\gamma^*$ becomes 
large enough but not larger than $n/2$, all the arcs in $B_{\mu}$ remains in the 
current graph obtained at an execution of {\bf Step 2b} and the current graph is strongly connected.
Moreover, since $\mu(x,y)=\gamma(x,y)-\gamma(y,x)>0$ and 
$\gamma(x,y)+\gamma(y,x)=n$ by \eqref{eq:v1a}, we have $\mu(x,y)=2\gamma(x,y)-n$.
Hence the ordinal relations for $\mu$ and $\gamma$ are the same 
for the current arc set $B_{\mu}$.  

Note that the operations performed by {\sf Procedure} {\bf SuccessiveDicut}
depend only on the ordinal relation by $\gamma$. 
So, when the margin graph $G_{\mu}=(X,B_{\mu})$ is strongly connected, 
applying our {\sf Procedure} {\bf SuccessiveDicut} to 
the network $\mathcal{N}_M=(G_{\mu},\mu)$ results in the same winner set as that 
obtained by {\sf Procedure} {\bf SuccessiveDicut} to the original  $\mathcal{N}=(G_{\gamma},\gamma)$. 
This observation is summarized in the following proposition, parallel to Proposition~\ref{prop:schulze_mar}.

\begin{proposition}\label{prop: SD_mar}
For any pairwise comparison network $\mathcal{N}=(G_{\gamma}, \gamma)$ and the corresponding majority margin network $\mathcal{N}_M=(G_{\mu}, \mu)$, we have
\[
F^{SD}(\mathcal{N})=F^{SD}(\mathcal{N}_M).
\]
\end{proposition}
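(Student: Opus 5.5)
We compare the two runs of {\sf Procedure} {\bf SuccessiveDicut} on $\mathcal{N}=(G_\gamma,\gamma)$ and on $\mathcal{N}_M=(G_\mu,\mu)$ phase by phase. Everything rests on one observation: the procedure only uses the ordinal structure of the arc weights, namely which current arcs attain the current minimum. We therefore split the run on $\mathcal{N}$ into an early phase, in which $\gamma^*\le n/2$, and a late phase, in which $\gamma^*>n/2$. We then show that the early phase ends in exactly the graph from which the run on $\mathcal{N}_M$ starts, and that the late phase coincides step by step with the whole run on $\mathcal{N}_M$.

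\textbf{Early phase.} Arcs $(x,y)\in A_\gamma$ with $\gamma(x,y)\le n/2$ are precisely the arcs with $\mu(x,y)\le 0$, i.e.\ those outside $B_\mu$; this uses \eqref{eq:v1a}. While $\gamma^*\le n/2$, the procedure deletes only such arcs, together with the arcs lost by discarding non-maximal components in Step 2b. The claim to prove is that when $\gamma^*$ first exceeds $n/2$, or when the loop stops, the current graph $H$ is the direct sum of the maximal strongly connected components of $G_\mu$, i.e.\ of $G_\mu$ itself restricted to those components. We prove this by tracking an invariant through the early phase:
\begin{quote}
(i) $U$ contains every vertex of every maximal strongly connected component of $G_\mu$, and (ii) every arc of $B_\mu$ inside such a component is still present.
\end{quote}

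Part (ii) holds because arcs of $B_\mu$ have $\gamma>n/2$, so they are never among the minimum arcs deleted in the early phase, and they lie inside a strongly connected subgraph, so they are never cut away in Step 2b. For part (i), note that $G_\gamma$ is semicomplete: for each pair, at least one of $\gamma(x,y),\gamma(y,x)$ is positive. Hence a maximal (source) component of $G_\mu$, together with the $\ge n/2$ arcs, stays inside a maximal component of the current graph. The argument: if $S$ is a source component of $G_\mu$ and $v\notin S$, then for $s\in S$ we have $\mu(v,s)\le 0$, so $\gamma(s,v)\ge n/2$, and the arc $s\to v$ survives until $\gamma^*$ reaches $n/2$.

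At the threshold value $\gamma^*=n/2$ (when $n$ is even), the tie arcs are deleted. After that, the remaining arcs are exactly the arcs of $B_\mu$ among the surviving vertices. Step 2b then keeps exactly the source components of $G_\mu$, which are the maximal components. If $G_\mu$ is strongly connected, this reduces to the observation stated in the paper before Proposition~\ref{prop: SD_mar}.

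\textbf{Late phase, and the main obstacle.} On the run on $\mathcal{N}_M$, Step 1 likewise starts from the maximal strongly connected components of $G_\mu$. I take ``the maximal strongly connected component'' in the sense of the direct sum of all source components, in line with Step 2b. Both runs now hold the same graph, and on $B_\mu$ we have $\mu=2\gamma-n$, a strictly increasing function of $\gamma$. So at every iteration the two runs delete the same set $B^*$, and Step 2b yields the same components. By induction on the number of iterations, both runs terminate with the same isolated vertex set $U$, so $F^{SD}(\mathcal{N})=F^{SD}(\mathcal{N}_M)$.

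I expect the main obstacle to be the early-phase invariant when $G_\mu$ is not strongly connected, especially with ties $\mu=0$ for even $n$. In that setting one must verify carefully that no vertex of a source component of $G_\mu$ is discarded before the threshold, and that every non-source vertex is discarded at or just after the threshold. I would argue this via the arcs $s\to v$ with $\gamma(s,v)\ge n/2$ for $s$ in a source component: these show that no vertex outside the source components can form a maximal component by itself once only $B_\mu$ arcs remain.
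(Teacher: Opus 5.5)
Your route differs from the paper's. The paper never compares the two runs. It argues $F^{SD}(\mathcal{N}_M)=F^{Schulze}(\mathcal{N}_M)$ by rerunning the reasoning of Theorem~\ref{thm:equivalence} on the margin network, then chains this with Proposition~\ref{prop:schulze_mar} and Theorem~\ref{thm:equivalence}. Your direct coupling of the two runs is self-contained and needs neither result, which is a real advantage. Two parts of it are sound: the invariant (i)--(ii), which gives ${\rm Schw}(G_\mu)\subseteq U$ throughout the early phase, and the late-phase induction via $\mu=2\gamma-n$.

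There is, however, a gap in the reverse inclusion at the tie step $\gamma^*=n/2$ ($n$ even). There you must show that Step 2b keeps \emph{only} source components of $G_\mu$, i.e.\ that every source component of $G_\mu[U]$ is a source component of $G_\mu$. The mechanism you propose, the arcs $s\to v$ with $\gamma(s,v)\ge n/2$ out of source-component vertices, cannot do this. Exactly the arcs with $\gamma(s,v)=n/2$ are deleted at that step, and a non-source vertex may be tied with every vertex of every source component. For instance, take $n=4$ and the profile $s\,u\,v$, $s\,u\,v$, $v\,s\,u$, $u\,v\,s$, which gives $\gamma(s,u)=\gamma(u,v)=3$ and $\gamma(s,v)=2$. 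The only source component of $G_\mu$ is $\{s\}$. After the tie arcs are deleted, $v$ is discarded only because $u$ is still in $U$ and $(u,v)\in B_\mu$.

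The missing idea is a third invariant: no arc of $B_\mu$ enters $U$ from $X\setminus U$. It holds because Step 1 and each step with $\gamma^*<n/2$ keep a source component of a graph that contains every $B_\mu$-arc with both ends in the current $U$. Such arcs have $\gamma>n/2$, and Step 2b never removes an arc joining two kept vertices, since distinct maximal components are not joined by arcs. Consequently:
\begin{itemize}
\item $U$ is closed under $G_\mu$-predecessors, so every $G_\mu$-path ending in $U$ stays in $U$;
\item the strongly connected components of $G_\mu[U]$ are exactly those of $G_\mu$ lying in $U$;
\item such a component is a source of $G_\mu[U]$ if and only if it is a source of $G_\mu$.
\end{itemize}
Together with (i), this gives $U={\rm Schw}(G_\mu)$ after the tie step, including when the loop stops there. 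The rest of your argument then goes through.
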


\subsection{Equivalence of the Schulze rule}
We are ready to state our main equivalence result.

\begin{theorem}\label{thm:equivalence}
For any pairwise comparison network $\mathcal{N}=(G_{\gamma},\gamma)$, 
\[
F^{SD}(\mathcal{N})=F^{Schulze}(\mathcal{N}).
\]
\end{theorem}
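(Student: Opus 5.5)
Write $\tau=\tau^\gamma_{G_\gamma}$. The plan is to describe the whole run of {\sf Procedure} {\bf SuccessiveDicut} in terms of \emph{threshold graphs}, and then to show that the surviving vertices are exactly those satisfying the Schulze condition. For a threshold $t$, let $G_{>t}$ be the subgraph of $G_\gamma$ with arcs $\{a\in A_\gamma\mid \gamma(a)>t\}$. By definition of the bottleneck strength, $\tau(x,y)>t$ holds iff $y$ is reachable from $x$ in $G_{>t}$ (and $\tau(x,y)\ge t$ iff $y$ is reachable in $G_{\ge t}$).

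\emph{Step 1: invariant.} I will prove by induction over the iterations of Step 2 that the current graph $H=(U,B)$, after the deletion with value $\gamma^*$ and the component extraction in Step 2b, has the following form. $U$ is the union of those strongly connected components $C$ of $G_{>\gamma^*}$ that are \emph{top} in the condensation, i.e.\ no arc of $G_{>\gamma^*}$ enters $C$ from outside. Moreover $B$ is the set of arcs of $G_{>\gamma^*}$ inside these components. (Here ``maximal'' strongly connected component is read as a source component of the condensation, matching Step 1, which keeps the top component(s) of $G_\gamma$.)

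For the inductive step I will use one key monotonicity fact. If $C$ is top in $G_{>t}$ and $t'>t$, then the top components of the restriction of $G_{>t'}$ to $C$ are exactly the top components of $G_{>t'}$ that lie inside $C$. Indeed, a component of $G_{>t'}$ lies inside a single component of $G_{>t}$. Any arc of $G_{>t'}$ entering it from outside $C$ would be an arc of $G_{>t}$ entering $C$, which is impossible.

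A second point to check is that components discarded earlier never come back. A vertex outside the top part is reachable from some top component via arcs of weight $>t$. Those arcs remain in $G_{>t}$ and are never reinstated, since the threshold only increases. The threshold sequence $\gamma^*$ runs through the weight values present in $B$; skipped values cause no harm because $G_{>t}$ only changes at weights that actually occur.

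\emph{Step 2: characterisation of the output.} When $B=\emptyset$, I conclude that $x\in F^{SD}$ iff, for every threshold $t$ reached, $x$ lies in a top component of $G_{>t}$ (with $t=0$ giving $G_\gamma$ itself). By the observation in the first paragraph, $x$ lies in a top component of $G_{>t}$ iff, for every $y$, $\tau(y,x)>t$ implies $\tau(x,y)>t$. I need this for all $t\ge 0$. Thresholds not attained by the procedure can be reduced to attained ones, because reachability in $G_{>t}$ is constant between consecutive arc weights. Weights larger than every arc in the current component only concern vertices already isolated; I will check this carefully.

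\emph{Step 3: equivalence with Schulze.} Suppose $x\in F^{Schulze}$ and $\tau(y,x)>t$. Then $\tau(x,y)\ge\tau(y,x)>t$, so $x$ is in a top component for every $t$. Conversely, suppose $x\notin F^{Schulze}$. Then some $y$ has $\tau(y,x)>\tau(x,y)=:t$. Hence $y$ reaches $x$ in $G_{>t}$ but $x$ does not reach $y$, so $x$ is not in a top component of $G_{>t}$. It remains to verify that $x$ gets eliminated at or before the iteration whose deletion threshold is the largest weight value $\le t$, using the monotonicity fact.

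\emph{Main obstacle.} I expect the main difficulty to be Step 1, namely making precise that the procedure's localized component extraction, performed only on the surviving part, agrees with the global top-component structure of $G_{>t}$ at every threshold. This includes the bookkeeping about which thresholds are actually visited, and the degenerate cases where a surviving component becomes a single vertex early and no longer sees later thresholds. For such a vertex $x$ I will argue directly: every $y$ outside $x$'s earlier component reaches $x$ only via arcs of weight at most the earlier threshold, while $x$ reaches $y$ at least as strongly. Since Step 1 already keeps only the top component(s), this should close the argument.
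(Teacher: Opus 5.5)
\textbf{There is a genuine gap.} The invariant you plan to prove in Step 1 is false as stated, and the justifications around it get the monotonicity backwards.

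Write $\mathrm{Top}(G)$ for the union of the top (source) components of $G$. Take the running example of Figure~\ref{Fig_doring} ($n=50$). There $f$ is discarded at $\gamma^*=24$: its last out-arc $(f,a)$ has weight $24$, and its in-arcs have weights $26,\dots,30$. For every $t\ge 30$, however, $f$ is isolated in $G_{>t}$, so it is itself a top component. Your invariant would therefore put $f$ into $U$ at the visited thresholds $31,32,33$. At $\gamma^*=33$ it predicts $U=\mathrm{Top}(G_{>33})=\{b,f\}$, whereas the procedure correctly returns $\{b\}$.

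You say the witnessing arcs ``remain in $G_{>t}$ and are never reinstated''. But raising the threshold \emph{deletes} exactly the arcs that certified that a vertex is not top. So a discarded vertex can become top in the global threshold graph; the procedure simply never looks at it again. For the same reason, ``$G_{>t}$ only changes at weights that actually occur'' does not license skipping thresholds. The weights $26,\dots,30$ do occur (on arcs at $f$), are never visited as $\gamma^*$, and $\mathrm{Top}(G_{>t})$ does change there.

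The repair uses your own monotonicity fact, which is correct. Apply it to each top component of $G_{>t_{\ell-1}}$ contained in $U_{\ell-1}$; distinct top components have no arcs between them. This gives the \emph{relative} invariant $U_\ell=U_{\ell-1}\cap\mathrm{Top}(G_{>t_\ell})$, where every component of $U_\ell$ is a top component of the global $G_{>t_\ell}$. Hence $F^{SD}=\bigcap_{j=0}^{k}\mathrm{Top}(G_{>t_j})$ with $t_0=0$, which is the form your Step 2 actually uses.

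The passage to unvisited thresholds must then be local. Since $t_{j+1}=\min\gamma(B_j)$, no arc inside $U_j$ has weight in $(t_j,t_{j+1})$. So for $t\in[t_j,t_{j+1})$, every top component $D\subseteq U_j$ of $G_{>t_j}$ is still strongly connected in $G_{>t}$ and receives no arc of $G_{>t}\subseteq G_{>t_j}$; hence it is a top component of $G_{>t}$. For $t\ge t_k$, the survivors are sources of $G_{>t_k}$ and stay sources. This gives $F^{SD}\subseteq\bigcap_{t\ge 0}\mathrm{Top}(G_{>t})=F^{Schulze}$; your converse in Step 3 reduces to exactly this argument. The reverse inclusion is immediate.

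Once repaired, your route genuinely differs from the paper's. The paper shows $F^{Schulze}\subseteq U$ stage by stage, via Lemma~\ref{lemma:discarding} with a case split at $n/2$ through the margin network and Proposition~\ref{prop:schulze_mar}, and then argues that the survivors tie pairwise. Your threshold characterization has several advantages:
\begin{itemize}
\item it works directly with $\gamma$, needing neither margins nor the $n/2$ cases;
\item it handles defeats by already discarded opponents directly, rather than comparing survivors only with each other;
\item it yields non-emptiness of the Schulze set, which the paper takes as given, as a byproduct.
\end{itemize}
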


We first show the following lemma for the proof of 
Theorem~\ref{thm:equivalence}.

\begin{lemma}\label{lemma:discarding}
Let $H_1=(U_1,B_1)$ be a maximal strongly connected component of the current 
graph $H=(U,B)$ obtained by an execution of\; {\bf Step 1} or {\bf Step 2b}. 
Also let $H_2=(U_2,B_2)$ be any strongly connected component of the current 
graph $H=(U,B)$ that is smaller than $H_1=(U_1,B_1)$ in the poset structure 
within the current graph $H=(U,B)$.
%%F Every alternative
Then, every alternative %%F
$y \in U_2$ is defeated by any alternative $x \in U_1$ in the sense of the Schulze rule according to $\mathcal{N}_M$.
Hence, the set $U_2$ of weaker candidates can be discarded, i.e., 
%%F $F^{Schulze}(\mathcal{N}_M) \subseteq U_1$.
$F^{Schulze}(\mathcal{N}_M)$ is included in the union of the vertex sets of 
maximal strongly connected components.  %%F
\end{lemma}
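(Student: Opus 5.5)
The plan is to show directly, using the strength $\tau^{\gamma}_{G_\gamma}$, that $\tau^{\gamma}_{G_\gamma}(x,y)>\tau^{\gamma}_{G_\gamma}(y,x)$ for every $x\in U_1$ and $y\in U_2$. The inclusion of $F^{Schulze}(\mathcal{N}_M)$ in the union of the maximal components then follows by passing to the margin network through Proposition~\ref{prop:schulze_mar}. The key device is a threshold graph. Let $\theta$ be the last value of $\gamma^*$ deleted before the current graph $H$ was formed, with $\theta=0$ if $H$ comes from {\bf Step 1}. Define
\[
G_\theta=\bigl(X,\{a\in A_\gamma\mid \gamma(a)>\theta\}\bigr).
\]
Every arc of the current $H$ has $\gamma>\theta$, so $H$ is a subgraph of $G_\theta$.

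First I will prove, by induction on the rounds, the invariant: \emph{every maximal strongly connected component of the current $H$ is a source component of $G_\theta$}, i.e.\ no arc of $G_\theta$ enters its vertex set from outside.
\begin{itemize}
\item For {\bf Step 1} we have $G_0=G_\gamma$, and a maximal component of $G_\gamma$ has no entering arc by definition.
\item For the inductive step, let $\theta_{\rm prev}<\theta$ be the previous threshold, and recall that after {\bf Step 2b} the current vertex set consists only of maximal components. Take a new maximal component $U_1$ and an arc $(w,u)$ with $\gamma(w,u)>\theta$, $u\in U_1$ and $w\notin U_1$.
\item If $w$ lies outside the previous vertex set, then $(w,u)$ is an arc of $G_{\theta_{\rm prev}}$ entering a previous source component, which is a contradiction.
\item If $w$ lies in the previous vertex set, then $w$ and $u$ lie in the same previous component, since components are source components and no arc of $G_{\theta_{\rm prev}}$ runs between them. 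Hence $(w,u)$ survives the deletion in {\bf Step 2a}, so it is an arc of $H$ entering $U_1$. This contradicts the maximality of $U_1$ in the poset of $H$.
\end{itemize}

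With the invariant in hand, the comparison is short. Since $H_2$ is below $H_1$ in the poset of $H$, there is a directed path in $H$ from any $x\in U_1$ to any $y\in U_2$. All its arcs have $\gamma>\theta$, so $\tau^{\gamma}_{G_\gamma}(x,y)>\theta$. Conversely, any directed path in $G_\gamma$ from $y\notin U_1$ to $x\in U_1$ must contain an arc entering $U_1$. By the invariant, that arc has $\gamma\le\theta$, so $\tau^{\gamma}_{G_\gamma}(y,x)\le\theta$; this includes the case $\Pi(y,x)=\emptyset$, where the strength is $0\le\theta$. Hence $x$ strictly beats $y$, and $y\notin F^{Schulze}(\mathcal{N})$.

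The same argument applies to every vertex ever discarded, both those discarded in {\bf Step 1} and those in non-maximal components at {\bf Step 2b}: each such vertex lies in some component below a maximal one. Therefore $F^{Schulze}(\mathcal{N})$ is contained in the union of the vertex sets of the maximal components, and by Proposition~\ref{prop:schulze_mar} the same holds for $F^{Schulze}(\mathcal{N}_M)$.

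If one insists on the pairwise defeat being read directly in $\tau^{\mu}_{G_\mu}$, I would redo the same argument on the margin network, using the monotone relation $\mu=2\gamma-n$ on majority arcs noted before Proposition~\ref{prop: SD_mar}. I expect this to be a minor point. The main obstacle is stating and verifying the source-component invariant carefully, in particular ruling out high-$\gamma$ arcs that come from vertices already discarded in earlier rounds.
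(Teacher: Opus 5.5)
Your argument is correct, but it takes a different route from the paper's. The paper splits into cases $\gamma^*<n/2$, $\gamma^*=n/2$ and $\gamma^*>n/2$:
\begin{itemize}
\item in the first case it uses that the surviving graph is still complete, so the maximal component is unique;
\item in the second it treats the connected pieces separately;
\item in the third it identifies the current graph with a thresholded majority graph.
\end{itemize}
In each case it then asserts that no path from $y$ back to $x$ keeps all arcs above $\gamma^*$, looking only inside the current $H$.

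Your single invariant replaces all of this. It says that every maximal component of the current $H$ is a source component of $G_\theta$. It needs neither uniqueness of the maximal component nor any reference to $n/2$. It also supplies exactly what the paper leaves implicit: a return path in $G_\gamma$ cannot regain strength by detouring through vertices discarded in earlier rounds. As a bonus, it gives the sharper separation $\tau^{\gamma}_{G_\gamma}(y,x)\le\theta<\tau^{\gamma}_{G_\gamma}(x,y)$.

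You use one fact silently and should record it as part of the invariant. The current $H$ is always the subgraph of $G_\theta$ induced on the current vertex set, because components are induced subgraphs and no arc joins two distinct maximal components. This is what makes the step ``$(w,u)$ survives the deletion in \textbf{Step 2a}'' legitimate.

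Your remark about $\mathcal{N}_M$ is not quite right as stated. Monotonicity of $\mu=2\gamma-n$ settles the case $\theta\ge n/2$. For $\theta<n/2$, however, the path from $x$ to $y$ in $H$ may use arcs with $\gamma\le n/2$, and these are not in $G_\mu$.

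The fix is short. For $\theta<n/2$, your invariant gives $\gamma(w,u)\le\theta<n/2$ for all $w\notin U_1$ and $u\in U_1$. So every pair $(u,w)$ is a majority arc, whence $\tau^{\mu}_{G_\mu}(x,y)\ge\mu(x,y)>0$. Also no majority arc enters $U_1$, whence $\tau^{\mu}_{G_\mu}(y,x)=0$. Alternatively, use the pairwise equivalence stated in the proof of Proposition~\ref{prop:schulze_mar}, which implies that strict defeats in $\mathcal{N}$ and in $\mathcal{N}_M$ coincide.
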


\begin{proof}
%%G \noindent {\bf Proof.} 
First, suppose that the initial graph $G_\gamma$ is not strongly connected.
There uniquely exists a maximal strongly connected component 
$H_1=(U_1,B_1)$ due to the definition of $G_\gamma$. 
Let $H_2=(U_2,B_2)$ be any other strongly connected component of $G_\gamma$. 
Then for any $x\in U_1$ and any $y\in U_2$ there exists a path from $x$ to $y$ 
but not from $y$ to $x$. This means that $y$ is defeated by $x$ in the sense 
of the Schulze rule. It follows that the set $U_2$ of weaker candidates can 
be discarded in {\bf Step 1}.

Next, consider an execution of {\bf Step 2b}.
We prove the present lemma in the following three cases. 
Let $\gamma^*$ be the one computed at an execution of {\bf Step 2}. 
\medskip\\
 \underline{Case 1: $\gamma^*<n/2$}.
%%F (added)
In the present case, there uniquely exists a maximal strongly connected 
component $H_1=(U_1,B_1)$ of the current graph $H=(U,B)$, since $H$ is 
connected by of the definition of pairwise comparison graph.
Let $H_2=(U_2,B_2)$ be any other strongly connected component of $H=(U,B)$.
For any $x\in U_1$ and any $y\in U_2$ there exists a path $\pi$ from $x$ to $y$ 
such that the minimum value of $\gamma(u,v)$ over the arcs $(u,v)$ in $\pi$ is 
greater than or equal to $\gamma^*$, but there does not exist such a path 
from $y$ to $x$. This means that $y$ is defeated by $x$ in the sense 
of the Schulze rule. It follows that the set $U_2$ of weaker candidates can 
be discarded.
%%F 
\medskip\\
%%F \noindent \underline{Case 2: $\gamma^*=n/2$}.
 \underline{Case 2: $\gamma^*=n/2$} (only if $n$ is even). %%F
Denote by $H^*=(U^*,B^*)$ the graph $H=(U,B)$ immediately before updating 
$H$ at {\bf Step 2a} and let $H$ be the one updated at {\bf Step 2a}.
Then the difference between $H^*$ and $H$, is that for every arc $(u,v)\in B^*$
satisfying $\gamma(u,v)=n/2$ both $(u,v)$ and $(v,u)$ disappear in $H$ 
and other arcs remain in $H$.
\Delete{
Moreover, the updated $H$ is exactly equal to $G_M$.
If a path $\pi$ from $u$ to $v$ attains $\tau_{H^*}(u,v)$ goes through an
arc $(u,v)\in B^*$ satisfying $\gamma(u,v)=n/2$, then 
$\tau_{H^*}^\gamma(u,v)=n/2$. This means that $\tau_{G_M}^\mu(u,v)=0$.
}
Hence we can treat each connected component of updated $H$ separately
and the proof in Case 1 can be adapted to the present case.
\medskip\\
 \underline{Case 3: $\gamma^*>n/2$}.
In this case, since the current $\gamma^*$ satisfies $\gamma^* > n/2$, 
we have $\mu(x,y) = 2\gamma(x,y) - n$ for all arcs $(x,y)\in B$ in $H = (U,B)$.
Moreover, the collection of maximal strongly connected components 
$H_i = (U_i, B_i)$ $(i \in I)$ of $H$ obtained at \textbf{Step 2b} is the same 
as that obtained for $G_{\mu}$ with all the arcs $(x,y)$ satisfying 
$\gamma(x,y) \le \gamma^*$ deleted and with the updated vertex set $U$.
Hence, similarly as in Case 1 and Case 2, we can prove the lemma in 
the present case.
\medskip\\
\indent Combining the three cases, we complete the proof.
\end{proof}
%%G \pend
\medskip

%%F \begin{proof}
\begin{proof}(\textit{Proof of Theorem \ref{thm:equivalence}})
By the execution of {\bf Step 1}, the obtained $U$ includes all the Schulze winners due to 
%%F Lemma~\ref{lemma:decomposition} and 
Lemma~\ref{lemma:discarding}.
During the execution of the while loop of {\bf Step 2}, we repeatedly remove weaker candidates and keep the current vertex set $U$ that includes all the Schulze winners. 
Here $U$ is the union of the maximal strongly connected components of a current $H$, where $H$ is possibly decomposed into at least two connected components.
The while loop of {\bf Step 2} is repeated at most $m$ times and the finally obtained graph $H=(U,B)$ consists of isolated vertices. 

%%F For every distinct $x,y\in U$, 
For the finally obtained $U$, consider any distinct $x,y\in U$.  %%F
%%F if arc $(x,y)$ 
If arc $(x,y)$ 
disappears because $\tau_{G_\gamma}^\gamma(x,y)=n/2$, 
it means $\tau_{G_{\mu}}^\mu(x,y)=0$. 
Otherwise, if arc $(x,y)$ disappears when $\gamma(x,y)=\gamma^*>n/2$, since both $x$ and $y$ belong to the final $U$, there exist a vertex set $W$ such that $\Delta^+(W)\neq\emptyset\neq\Delta^-(W)$, $(x,y)\in \Delta^+(W)\cup \Delta^-(W)$, and $\gamma(w,z)=\gamma^*$ $(\forall (w,z)\in  \Delta^+(W)\cup \Delta^-(W))$. This means that $\tau_{G_{\gamma}}^\gamma(x,y)=\tau_{G_{\gamma}}^\gamma(y,x)=\gamma^*$, i.e., $\tau_{G_{\mu}}^\mu(x,y)=\tau_{G_{\mu}}^\mu(y,x)$. Therefore, for every distinct $x,y\in U$, $x$ never defeats $y$, so that the final $U$ satisfies $U=F^{Schulze}(\mathcal{N}_M)$. 
By Proposition \ref{prop:schulze_mar}, we obtain $F^{SD}(\mathcal{N})=U=F^{Schulze}(\mathcal{N}_M)=F^{Schulze}(\mathcal{N})$, which completes the proof.
%%F\end{proof}
%%G \pend
\end{proof}

%%%%%%%%%%%%%%%%%%%%%%%%%%%%%%%%%%%%%%%%%%%%%%
\subsection{Equivalence of the Schwartz set}
%%%%%%%%%%%%%%%%%%%%%%%%%%%%%%%%%%%%%%%%%%%%%%

For a graph $G=(V,A)$,
the well-known set-valued outcome called \textit{Schwartz set} for $G$ 
is formally defined as
\[
\text{Schw}(G) \equiv
\bigl\{x \in V \mid
    \forall y \in V:\;
    x \in R(y) \Longrightarrow y \in R(x)
    \bigr\},
\]
where $R(x)$ denotes the set of vertices that can be reached by paths 
from $x$ in $G$. It follows from this definition that 
\begin{itemize}
\item[{\bf (Schw)}] the Schwartz set $\text{Schw}(G)$ is exactly 
the union of the vertex sets of all {\it maximal} strongly connected 
components of $G$.
\end{itemize}
For, if $x$ is a vertex of a maximal strongly connected component, then there 
holds: $\forall y \in V:\; x \in R(y) \Longrightarrow y \in R(x)$. On the 
other hand, if $x$ is a vertex of a strongly connected component that 
is not maximal, then there exists a vertex $y$ in some maximal strongly 
connected component such that we have $x \in R(y)$ but not $y \in R(x)$. 
(See Appendix \ref{appendix: graph} for the definition of {\it maximal} strongly connected 
component.)

In our algorithm, we %%F iteratively 
execute \textbf{Step~2} until the set $B$ becomes empty. Let 
%%F $T$ 
$k$ %%F
be the total number of executed while loops. 
%%F performed.
Then, %%F we denote 
let
%%F $H^{t}_k = (U^t_k, B^t_k)$ with $k \in \{a,b\}$ 
$H^{(\ell)}_a = (U^{(\ell)}_a, B^{(\ell)}_a)$ and 
$H^{(\ell)}_b = (U^{(\ell)}_b, B^{(\ell)}_b)$, respectively, 
%%F with $k \in \{a,b\}$ 
%%F \textbf{Step~2k} in iteration for each round $t\in\{1,2,\dots,T\}$.
%%F be the set %%F
be the current graphs $H=(U,B)$  %%F
obtained at the end of the $\ell$th execution of \textbf{Step~2a} and
\textbf{Step~2b}.
We also define $U^{(0)}_b$ to be the set $U$ obtained by {\bf Step 1}.
By the above discussion applying to $\mathcal{N}=(G_{\gamma},\gamma)$ and Theorem \ref{thm:equivalence}, we obtain the following characterization of the Schulze winner set in terms of successive formations of the Schwartz set.

\begin{corollary}
For any pairwise comparison network $\mathcal{N}=(G_{\gamma},\gamma)$, the following three statements hold.
\begin{enumerate}
  \item[{\rm (i)}] %%F $U^1_a = \text{Schw}(G_\gamma)$.
$U^{(0)}_b = {\rm Schw}(G_\gamma)$.
  \item[{\rm (ii)}] %%F In each iteration $t$ $(1 \le t \le T)$ of \textbf{Step~2},
In each $\ell$th execution of the while loop of\;
%%F \textbf{Step~2},
 {\bf Step~2},
%%F  we have $\text{Schw}(H^{t}_a) = U^t_b$.
 we have ${\rm Schw}(H^{(\ell)}_a) = U^{(\ell)}_b$ for $\ell=1,2,\cdots,k$.
  \item[{\rm (iii)}] %%F The sequence $\{ U^t_b \}_{t=1}^T$ 
The sequence $\{ U^{(\ell)}_b \}_{\ell=0}^k$ %%F
is monotonically non-increasing with respect to set inclusion, 
%%F and $U^T_b = F^{Schulze}(\mathcal{N})$.
and $U^{(k)}_b = F^{Schulze}(\mathcal{N})$.
\end{enumerate}
\end{corollary}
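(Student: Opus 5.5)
The plan is to derive all three parts directly from property (Schw), the definitions of the procedure's steps, and Theorem~\ref{thm:equivalence}.

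For (i), Step~1 sets $U$ to the vertex set of the maximal strongly connected component(s) of $G_\gamma$. By (Schw), the union of the vertex sets of all maximal strongly connected components of $G_\gamma$ is exactly ${\rm Schw}(G_\gamma)$, so $U^{(0)}_b={\rm Schw}(G_\gamma)$. I will note that Step~1 should be read as collecting all maximal components, as the footnote in Step~2b does. In fact this component is unique: whenever $\gamma(x,y)=0$ we have $\gamma(y,x)=n>0$ by \eqref{eq:v1a}, so $G_\gamma$ is a semicomplete digraph and its condensation is a total order.

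For (ii), fix $\ell\in\{1,\dots,k\}$. The graph $H^{(\ell)}_a$ is the current graph right after the deletion of $B^*$ in the $\ell$th Step~2a. Step~2b then computes all maximal strongly connected components $H_i=(U_i,B_i)$ of exactly this graph, and sets $U^{(\ell)}_b=\bigcup_{i\in I}U_i$. Applying (Schw) to $G=H^{(\ell)}_a$ gives ${\rm Schw}(H^{(\ell)}_a)=U^{(\ell)}_b$. The only care needed is that reachability $R(\cdot)$ is taken within $H^{(\ell)}_a$ on the vertex set $U^{(\ell-1)}_b$, which is how Step~2a leaves the vertex set untouched.

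For (iii), monotonicity is immediate: $U^{(\ell)}_a=U^{(\ell-1)}_b$ because Step~2a deletes only arcs. Also $U^{(\ell)}_b\subseteq U^{(\ell)}_a$ because Step~2b keeps a subfamily of the strongly connected components of $H^{(\ell)}_a$. Hence $U^{(\ell)}_b\subseteq U^{(\ell-1)}_b$. The loop stops exactly when $B=\emptyset$, i.e.\ after the $k$th iteration, and Step~3 returns $U^{(k)}_b=F^{SD}(\mathcal{N})$. Theorem~\ref{thm:equivalence} then gives $U^{(k)}_b=F^{Schulze}(\mathcal{N})$. It remains to check that the loop terminates, so that $k$ is finite: each Step~2a removes at least one arc when $B\neq\emptyset$, so $k\le|A_\gamma|$.

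I expect no real obstacle, since everything reduces to bookkeeping of indices plus (Schw) and the theorem. The one point to handle carefully is the indexing convention in (ii). We must make sure ${\rm Schw}$ is applied to the graph after arc deletion but before discarding non-maximal components, and that the Schwartz set of a graph whose components are a direct sum is the union of the Schwartz sets of the pieces. This is again covered by (Schw), since maximality of a strongly connected component is a property within the whole current graph.
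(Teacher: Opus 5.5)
Your proposal is correct and follows essentially the paper's route. Part (i) comes from (Schw) together with the linear ordering of the strongly connected components of $G_\gamma$ (Proposition~\ref{prop: poset}, which is exactly your semicompleteness remark), and (ii)--(iii) come from applying (Schw) to $H^{(\ell)}_a$ and invoking Theorem~\ref{thm:equivalence}. You also correctly spell out the bookkeeping the paper leaves implicit: Step~2a deletes only arcs, so $U^{(\ell)}_a=U^{(\ell-1)}_b$, which gives the monotonicity, and the loop terminates with $k\le|A_\gamma|$.
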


%%F 
\begin{proof}
%\noindent {\bf Proof.}
By {\bf Step 1} of our algorithm, we can obtain a linearly ordered set of strongly connected components $H_i$ $(i\in I)$ of $G_{\gamma}$ (see Proposition \ref{prop: poset}). Hence, (i) holds.
Moreover, \textbf{(Schw)} and Theorem~\ref{thm:equivalence} directly 
%%F implies 
imply 
(ii) and (iii).
%%F 
\end{proof}
%\pend\medskip %%F

%%G By {\bf Step 1} of our algorithm, we can obtain a linearly ordered set of strongly connected components $H_i$ $(i\in I)$ of $G_{\gamma}$ (see Proposition \ref{prop: poset}). 
%%G Hence the set $U$ obtained by {\bf Step 1} of {\sf Procedure} 
%%G {\bf SuccessiveDicut} is the Schwartz set $\text{Schw}(G_\gamma)$.
%%G Also, the set $U$ obtained at the end of an execution of {\bf Step 2b} is the Schwartz set $\text{Schw}(H)$ with $H$ being the current graph $H$ updated at the end of {\bf Step 2a}.

%%F This implies 
We now see  %%F 
that Schulze rule is an iterative procedure successively forming the 
%%F schwartz set.
Schwartz set. %%F
The notion of Schwartz set is an interpretation by 
%%F \citet{Schwartz1972} 
%%G Schwartz (1972) %%F 
\citet{Schwartz1972}
%%G Schwartz (1972) %%F
of the maximazation-based rationalization, widely used in economics, to environments with cyclical preferences.
Although this alternative interpretation of Schulze rule 
%%F is already acknowledged by 
seems to have been already acknowledged by 
%%F \citet{Schulze2003}, 
%%G Schulze (2003), %%F
\citet{Schulze2003},
%%G Schulze (2003), %%F
the proof is not given.
%%F We establish 
We have thus established this fact here. 
%%F , partly to provide a proof for a statement treated only informally in the past literature.

\section{Successive elimination, Kemeny--Young, and Condorcet}\label{sec: KM}

As already mentioned, computing the Schulze winner set can be formulated as an instance of the all-pairs bottleneck path problem. 
In contrast, our procedure {\bf SuccessiveDicut} admits an interpretation as a successive cycle-elimination method on the pairwise majority graph.
This perspective is closely related to the Kemeny--Young method, which resolves cyclic majority relations by constructing a globally consistent linear ordering.

To clarify this relationship, consider a directed cycle $C$ in the pairwise-comparison network $\mathcal{N}=(G_{\gamma},\gamma)$.
Removing one arc from the cycle breaks the contradiction and leaves a directed path.
The quality of such a deletion can be evaluated using the notion of {\sl support} introduced by \citet{Young1995}.
Given a cycle $C$ in $G_{\gamma}$ and an arc $(\hat{x},\hat{y})\in C$, deleting $(\hat{x},\hat{y})$ produces a directed path $C(\hat{x},\hat{y})$ from $\hat{y}$ to $\hat{x}$.
The support of this path is defined by
\begin{equation}\label{eq:rem1}
 \sigma(C(\hat{x},\hat{y}))
 \equiv
 \sum_{(x,y)\in C(\hat{x},\hat{y})}\gamma(x,y)
 - \gamma(\hat{x},\hat{y})+n,
\end{equation}
which can be rewritten as
\begin{equation}\label{eq:rem2}
 \sigma(C(\hat{x},\hat{y}))
 =
 \sum_{(x,y)\in C}\gamma(x,y)
 -2\gamma(\hat{x},\hat{y})+n.
\end{equation}

Suppose that
\[
(x^*,y^*)
\in
\arg\min_{(x,y)\in C}\gamma(x,y),
\]
that is, $(x^*,y^*)$ is the weakest arc in the cycle $C$.
Equation~(\ref{eq:rem2}) implies that removing $(x^*,y^*)$ maximizes the support of the remaining path.
Thus, {\bf SuccessiveDicut} may be interpreted as iteratively resolving cyclic contradictions by preserving the strongest remaining support structure.
In this sense, the procedure provides a complementary perspective to the all-pairs bottleneck path approach underlying the Schulze rule.

This graph-theoretic interpretation also suggests a different reading of Condorcet's original proposal from that of \citet{Young1988}. 
Given Condorcet’s overarching epistemic aim of maximizing the probability of truth-conformity, Young gives a maximum-likelihood interpretation of Condorcet's method and connects it to the problem of recovering the most plausible complete ranking.
By contrast, our interpretation focuses on Condorcet's discussion of resolving inconsistencies among majority-supported propositions.
If the aim is to choose the best set of alternatives, excluding all global inconsistencies may be an unnecessarily strong requirement.
Therefore, Condorcet distinguishes the consistency required for determining a complete social ranking from the weaker consistency sufficient for selecting a single winner.\footnote{
Condorcet writes:
\begin{quote}
There are two questions to be distinguished here. One may ask whether there is any contradiction in the whole system of propositions; but one may also ask whether there is any contradiction only in that part of the system which is sufficient to decide the superiority of one candidate over all the others.
\citep[pp.~clxx--clxxi]{Condorcet1785}
\end{quote}
}
From this perspective, the objective of the aggregation procedure is not necessarily to construct a complete linear order, but rather to retain an acyclic subsystem of majority-supported propositions sufficient to identify a maximal alternative.

%%F Our {\bf SuccessiveDicut} procedure 
Our Procedure {\bf SuccessiveDicut} %%F 
formalizes this interpretation by successively eliminating weak majority relations until a stable acyclic structure remains.
The equivalence with the Schulze winner set therefore provides a new Condorcetian interpretation of the Schulze rule.
In this sense, the Schulze rule may be viewed as a modern formalization of Condorcet's idea of resolving cyclic majority contradictions through successive elimination of weak propositions.

The distinction between our method and the Kemeny--Young rule reflects the difference between two distinct classes of graph-theoretic optimization problems: bottleneck path problems and linear ordering problems.
Whereas {\bf SuccessiveDicut} operates through iterative elimination of weak pairwise relations, the Kemeny--Young method solves a global ranking optimization problem.

Finding a social ordering according to the Kemeny--Young method is known to be equivalent to minimizing the Kendall tau distance.
For any two linear orders $P,P'\in\mathcal{P}$, the Kendall-tau distance is defined by
\[
\mathcal{K}(P,P')
=
\left|
\left\{
(x,y)\in X\times X
\mid
xPy
\text{ and }
yP'x
\right\}
\right|.
\]
The Kemeny--Young method determines the social preference relation by minimizing the total Kendall-tau distance to the voters' preferences in the following sense.
\[
P^*
\in
\arg\min_{P\in\mathcal{P}}
\sum_{i\in N}\mathcal{K}(P,P_i).
\]
The following proposition shows that this problem can be formulated as the linear ordering problem on the pairwise-comparison network.

\begin{proposition}\label{prop:KY}
Given a profile $(P_1,\ldots,P_n)\in\mathcal{P}^n$ and its pairwise-comparison network $\mathcal{N}=(G_{\gamma},\gamma)$,
\[
\arg\min_{P\in\mathcal{P}}
\sum_{i\in N}\mathcal{K}(P,P_i)
=
\arg\max_{P\in\mathcal{P}}
\sum_{(x,y)\in P}\gamma(x,y).
\]
\end{proposition}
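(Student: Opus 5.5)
We identify each linear order $P\in\mathcal{P}$ with its set of ordered pairs $\{(x,y)\in X\times X \mid xPy\}$. The plan is to show that the Kemeny objective is an affine, strictly decreasing function of $\sum_{(x,y)\in P}\gamma(x,y)$, with a constant that does not depend on $P$. Since the two objectives are then related by a strictly decreasing transformation, their sets of optimizers over $\mathcal{P}$ coincide.

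First, exchange the order of summation:
\[
\sum_{i\in N}\mathcal{K}(P,P_i)=\sum_{(x,y)\in P}\bigl|\{i\in N\mid yP_ix\}\bigr|=\sum_{(x,y)\in P}\gamma(y,x).
\]
This holds because, by definition of $\mathcal{K}$, voter $i$ contributes one unit for each pair $(x,y)$ with $xPy$ and $yP_ix$, and $yP_ix$ is exactly the condition $r_i(y)<r_i(x)$ counted in $\gamma(y,x)$.

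Next, apply \eqref{eq:v1a}: for $(x,y)\in P$ we have $x\neq y$, so $\gamma(y,x)=n-\gamma(x,y)$. Since $P$ is a linear order on $X$, it contains exactly one of $(x,y)$ and $(y,x)$ for each unordered pair of distinct alternatives, so $|P|=\binom{m}{2}$ for every $P$. Therefore
\[
\sum_{i\in N}\mathcal{K}(P,P_i)=n\binom{m}{2}-\sum_{(x,y)\in P}\gamma(x,y).
\]
The first term is independent of $P$, so minimizing the left-hand side over $\mathcal{P}$ is equivalent to maximizing $\sum_{(x,y)\in P}\gamma(x,y)$, and the two argmin/argmax sets are equal.

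There is no real obstacle in this argument. The only step that needs care is justifying that $|P|$ is constant over $\mathcal{P}$, which uses completeness and antisymmetry of linear orders. It is also worth noting that arcs with $\gamma(x,y)=0$, which are absent from $A_\gamma$, contribute zero to the sum, so summing over $P$ or over $P\cap A_\gamma$ gives the same value.
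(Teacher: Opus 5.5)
Your proof is correct and is the same argument the paper gives: rewrite the total Kendall-tau distance as $\sum_{(x,y)\in P}\gamma(y,x)$, substitute $\gamma(y,x)=n-\gamma(x,y)$ using \eqref{eq:v1a}, and note that the remaining term does not depend on $P$. You also check explicitly that $|P|=\binom{m}{2}$ for every linear order, which the paper leaves implicit and which is exactly why $\sum_{(x,y)\in P} n$ is constant over $\mathcal{P}$.
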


\begin{proof}
By definition,
\begin{align*}
\sum_{i\in N}\mathcal{K}(P,P_i)
&=
\sum_{(x,y)\in P}
|\{i\in N\mid yP_i x\}| \\
&=
\sum_{(x,y)\in P}\gamma(y,x) \\
&=
\sum_{(x,y)\in P}(n-\gamma(x,y)),
\end{align*}
where the last equality follows from \eqref{eq:v1a}.
Therefore, minimizing $\sum_{i \in N} \mathcal{K}(P, P_i)$ is equivalent to maximizing the last term, that is,
\[
\arg\min_{P \in \mathcal{P}} \sum_{i \in N} \mathcal{K}(P, P_i) = \arg\max_{P \in \mathcal{P}} \sum_{(x,y) \in P} \gamma(x,y).
\]
\end{proof}

In this sense, the Kemeny--Young rule and the Schulze rule represent two contrasting approaches to resolving cyclic majority relations: the former seeks a globally optimal linearization of pairwise comparisons, whereas the latter preserves pairwise dominance information through strongest-path propagation and successive elimination of weak contradictions.

This distinction also appears in the structure of the resulting social ranking.
The Kemeny--Young method always produces a complete linear order, whereas {\sf{Procedure}} {\bf SuccessiveDicut} naturally induces a hierarchical decomposition of alternatives through recursive elimination of cyclic majority relations.
More precisely, each stage of the procedure produces a linearly ordered family of strongly connected components, and recursively applying the procedure to each component yields a rooted tree structure of aggregated rankings.
The ranking-theoretic implications of this recursive decomposition are discussed further in Subsection~\ref{sec:aggregated_ranking}.

In generic situations where all pairwise supports are distinct, this hierarchical structure degenerates into a path and therefore determines a complete linear ordering.
Such situations are likely to arise when the number of voters is sufficiently larger than the number of alternatives, since ties among pairwise supports then become increasingly unlikely.

\section{Discussion}\label{sec:discussion}

\subsection{The aggregated ranking}\label{sec:aggregated_ranking}
Recall that, to induce the maximal strongly connected components, we can obtain a linearly ordered family of strongly connected components  $H_1 \succ H_2 \succ \cdots \succ H_k$
of the pairwise-comparison graph $G_{\gamma}$ by {\bf Step~1} of {\sf{Procedure}} {\bf SuccessiveDicut} .
If some component $H_i$ contains more than one alternative, we may recursively apply {\bf SuccessiveDicut} to the subgraph induced by $H_i$.
In this way, the procedure generates a hierarchical decomposition of the candidate set, which naturally induces a rooted tree structure.
See Figure~\ref{fig_continued}.

This recursive decomposition may be interpreted as an aggregated ranking structure.
Even when the majority relation fails to determine a complete linear order globally, the procedure still identifies successive layers of dominance among alternatives.
Thus, the output should be understood not merely as a winner set, but as a hierarchical ranking obtained through iterative resolution of cyclic majority relations.
In particular, if all values of $\gamma$ are distinct, then every elimination step uniquely determines the next decomposition.
Consequently, the induced tree structure degenerates into a path, and the procedure yields a complete linear ordering of alternatives.
Such a situation is likely to occur when the number of voters is sufficiently larger than the number of alternatives, for example when $m \ll n$, since ties among pairwise supports then become increasingly unlikely.

%================================================
% Figure 10 continued
%================================================
\tikzset{
  E/.style={
    -{Latex[length=1.8mm,width=1.35mm]},
    line width=.55pt,
    shorten >=10.5pt,
    shorten <=4pt
  },
  N/.style={
    font=\scriptsize,
    fill=white,
    inner sep=.7pt
  },
  Circ/.style={
    font=\scriptsize,
    draw,
    circle,
    fill=white,
    inner sep=.5pt
  },
  V/.style={
    font=\Large,
    fill=white,
    inner sep=2pt
  },
  Cut/.style={
    dashed,
    red!80!black,
    line width=1.2pt
  },
}

%-----------------------------
% Coordinates
%-----------------------------
\def\CoordsACDE{%
  \coordinate (a) at (0,2.45);
  \coordinate (d) at (0,0);
  \coordinate (c) at (3.75,0);
  \coordinate (e) at (5.95,1.18);
}

\def\CoordsDCE{%
  \coordinate (d) at (0,0);
  \coordinate (c) at (3.55,0);
  \coordinate (e) at (5.65,1.18);
}

\def\CoordsDCEB{%
  \coordinate (d) at (0,0);
  \coordinate (c) at (4.05,.02);
  \coordinate (e) at (5.90,1.30);
}

\def\CoordsDE{%
  \coordinate (d) at (0,0);
  \coordinate (e) at (5.15,.92);
}

%-----------------------------
% Vertices
%-----------------------------
\def\VerticesACDE{%
  \node[V] at (a) {$a$};
  \node[V] at (d) {$d$};
  \node[V] at (c) {$c$};
  \node[V] at (e) {$e$};
}

\def\VerticesDCE{%
  \node[V] at (d) {$d$};
  \node[V] at (c) {$c$};
  \node[V] at (e) {$e$};
}

\def\VerticesDE{%
  \node[V] at (d) {$d$};
  \node[V] at (e) {$e$};
}

%-----------------------------
% Edges
%-----------------------------
\def\EAD#1{%
  \draw[E] (a)--node[#1,right,pos=.45,xshift=-3pt,yshift=0pt] {$36$} (d);
}

\def\EDC#1{%
  \draw[E] (d)--node[#1,below,pos=.52,yshift=-2pt] {$35$} (c);
}

\def\EED#1{%
  \draw[E] (e)--node[#1,above,pos=.50,xshift=8pt,yshift=-1pt] {$37$} (d);
}

\def\ECE#1{%
  \draw[E] (c)--node[#1,below,pos=.57,xshift=-5pt,yshift=-1pt] {$39$} (e);
}

\def\EAE#1{%
  \draw[E] (a)--node[#1,above,pos=.77,xshift=1pt,yshift=-3pt] {$40$} (e);
}

\def\ECA#1{%
  \draw[E] (c)--node[#1,above,pos=.52,xshift=-6pt,yshift=-1pt] {$34$} (a);
}

\def\EdgesACDEAll{%
  \EAD{N}
  \EDC{N}
  \EED{N}
  \ECE{N}
  \EAE{N}
  \ECA{Circ}
}

\def\EdgesACDENoCA{%
  \EAD{N}
  \EDC{N}
  \EED{N}
  \ECE{N}
  \EAE{N}
}

\begin{figure}[!htbp]
\centering

\begin{tikzpicture}[x=.68cm,y=.75cm]

%-----------------------------
% (7)
%-----------------------------
\begin{scope}[shift={(0,6.7)}]
  \CoordsACDE
  \EdgesACDEAll
  \VerticesACDE
  \node[font=\Large] at (2.45,-1.15) {(7)};
\end{scope}

%-----------------------------
% (8)
%-----------------------------
\begin{scope}[shift={(8.45,6.7)}]
  \CoordsACDE
  \EdgesACDENoCA

  \draw[Cut] (-.20,1.42)--(1.50,2.62);

  \VerticesACDE
  \node[font=\Large] at (2.75,-1.15) {(8)};
\end{scope}

%-----------------------------
% (9)
%-----------------------------
\begin{scope}[shift={(.35,3.25)}]
  \CoordsDCE
  \EDC{Circ}
  \EED{N}
  \ECE{N}
  \VerticesDCE
  \node[font=\Large] at (2.55,-1.15) {(9)};
\end{scope}

%-----------------------------
% (10)
%-----------------------------
\begin{scope}[shift={(8.55,3.25)}]
  \CoordsDCEB
  \EED{N}
  \ECE{N}

  \draw[Cut] (4.4,0.8)--(5,.04);

  \VerticesDCE
  \node[font=\Large] at (2.85,-1.15) {(10)};
\end{scope}

%-----------------------------
% (11)
%-----------------------------
\begin{scope}[shift={(.65,0)}]
  \CoordsDE
  \EED{N}

  \draw[Cut] (3.55,1.35)--(3.83,-.15);

  \VerticesDE
  \node[font=\Large] at (2.65,-1.15) {(11)};
\end{scope}

\end{tikzpicture}

\caption{An example from \citet{doring2026river} (continued). The remaining steps of algorithm %%%F procedes 
proceedes with (7) $\rightarrow$ (8) $\rightarrow$ (9) $\rightarrow$ (10) $\rightarrow$ (11).
Then, the final social ordering $P$ corresponds to $b P a P c P e P d$.}
\label{fig_continued}
\end{figure}

\subsection{Satisfaction of the new axioms}
\cite{Schulze2003,Schulze2011,Schulze2025} extensively investigate the axiomatic properties of the Schulze rule.
Thanks to the equivalence between our new algorithm and the Schulze winner set, we can show that the Schulze rule also satisfies the following axiom, called \textit{Schwartz-IIA}.\footnote{The name of this axiom is inspired by the \text{Smith-IIA} criterion introduced in \citet{Schulze2003}. A smith set is defined as $\text{Smith}(G) \equiv \{ x \in V \mid \forall y \in V \colon y \in R(x)\}$ over any graph $G = (V,A)$.} To the best of our knowledge, it has not previously been formally shown that the Schulze rule satisfies this condition.

\iffalse
Consider an alternative $x$ that is Pareto-dominated by every other alternative. That is,
\[
\forall y \in A \setminus \{x\},\quad y \succ_i x
\quad \text{for all } i \in N.
\]
We call the alternative \textit{unanimously Pareto-dominated alternative}, which can be strategically easy to introduce into a given election.
A desirable property of a voting rule is that its outcome be independent of the existence of such an alternative, which is formally defined as follows.
\begin{itemize}
\item[] \textbf{Independence of Unanimously Pareto-dominated Alternative:} If $y \succ_i x$ for all $y \in A \backslash \{x\}$ and $i \in N$, then $F(\mathcal{N}) = F(\mathcal{N}|_{X\setminus\{x\}})$, where $\mathcal{N}|_{X\setminus\{x\}}=({X \backslash \{x\}}, A_{\gamma}|_{X \backslash \{x\}},\gamma|_{X\backslash \{x\}})$.
\end{itemize}
\fi

\begin{itemize}
\item[] \textbf{Schwartz-IIA:} If $x \notin \text{Schw}(G_{\gamma})$,\\ then $F(\mathcal{N}) = F(\mathcal{N}|_{X\setminus\{x\}})$, where $\mathcal{N}|_{X\setminus\{x\}}=({X \backslash \{x\}}, A_{\gamma}|_{X \backslash \{x\}},\gamma|_{X\backslash \{x\}})$.
\end{itemize}

\begin{proposition}\label{prop: IIA}
The Schulze rule satisfies \textit{Schwartz-IIA}.
\end{proposition}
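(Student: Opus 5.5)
The plan is to reduce the statement to Theorem~\ref{thm:equivalence}. It suffices to show that {\sf Procedure} {\bf SuccessiveDicut} produces the same output on $\mathcal{N}$ and on $\mathcal{N}|_{X\setminus\{x\}}$ whenever $x\notin\text{Schw}(G_\gamma)$.

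First I would check that $\mathcal{N}|_{X\setminus\{x\}}$ is itself a pairwise comparison network. Restricting each $P_i$ to $X\setminus\{x\}$ gives a profile whose pairwise records are exactly $\gamma|_{X\setminus\{x\}}$. Its pairwise-record graph is therefore the subgraph of $G_\gamma$ induced by $X\setminus\{x\}$. Consequently Theorem~\ref{thm:equivalence} gives
\[
F^{Schulze}(\mathcal{N})=F^{SD}(\mathcal{N}), \qquad F^{Schulze}(\mathcal{N}|_{X\setminus\{x\}})=F^{SD}(\mathcal{N}|_{X\setminus\{x\}}).
\]

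Next I would analyse \textbf{Step 1} on both networks. By \eqref{eq:v1a}, for every distinct $u,v$ at least one of $\gamma(u,v),\gamma(v,u)$ is positive. So $G_\gamma$ is semicomplete, and its strongly connected components are linearly ordered (Proposition~\ref{prop: poset}). In particular there is a unique maximal component $H_1=(U_1,B_1)$, and by \textbf{(Schw)} and Corollary (i) we have $U_1=\text{Schw}(G_\gamma)$.

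Since $x\notin U_1$, the induced subgraph on $U_1$ is untouched by deleting $x$, so it remains strongly connected. No arc of $G_\gamma$ enters $U_1$ from outside. Indeed, every $u\notin U_1$ is reachable from $U_1$, so an arc $(u,v)$ with $v\in U_1$ would put $u$ into the component $U_1$. The same holds in the restricted graph. Hence $U_1$ is a strongly connected set with no entering arcs in $G_\gamma|_{X\setminus\{x\}}$. It must therefore be the unique maximal strongly connected component there, uniqueness again following from semicompleteness.

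Thus \textbf{Step 1} returns the same graph $H=(U_1,B_1)$ with the same weights $\gamma|_{U_1}$ in both cases. The steps of \textbf{Step 2} depend only on the current $H$ and on $\gamma$ restricted to its arcs. So the two executions coincide step by step, and $F^{SD}(\mathcal{N})=F^{SD}(\mathcal{N}|_{X\setminus\{x\}})$. Combining this with the two displayed equalities yields $F^{Schulze}(\mathcal{N})=F^{Schulze}(\mathcal{N}|_{X\setminus\{x\}})$.

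The main obstacle is the careful verification that $U_1$ is still the (unique) maximal component after deleting $x$, specifically the no-entering-arc argument. Note that when $G_\gamma$ is strongly connected the hypothesis $x\notin\text{Schw}(G_\gamma)$ is vacuous, since then $\text{Schw}(G_\gamma)=X$, so no separate case is needed.
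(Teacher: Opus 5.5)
Your proposal is correct and follows the paper's route: it reduces the statement to Theorem~\ref{thm:equivalence} and observes that $x$ is discarded in \textbf{Step 1}, while the maximal strongly connected component returned by \textbf{Step 1} (and hence the rest of the run) is unchanged by deleting $x$. You also supply two details the paper leaves implicit: that $\mathcal{N}|_{X\setminus\{x\}}$ is again a pairwise comparison network (so the theorem applies to it), and the no-entering-arc argument showing $U_1$ stays the unique maximal component.
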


\begin{proof}
Suppose that $x$ is not part of the maximal strongly connected component of the corresponding digraph. Then, it is eliminated in \textbf{Step~1} of the procedure, and is never chosen.
Moreover, the existence of such an alternative does not alter the result of \textbf{Step 1} of the procedure, so that Schulze rule returns the same winner set regardless of its existence.
\end{proof}

Thanks to Proposition \ref{prop: IIA}, the Schulze rule can also be shown to satisfy the following two desirable properties, which have not been investigated in the literature.

First, we say that a winner set satisfies independence of Condorcet losers if it is invariant under the removal of a Condorcet loser.

\begin{itemize}
\item[] \textbf{Independence of Condorcet Losers}: If $|\{i \in N \mid y \succ_i x\}| > |\{i \in N \mid x \succ_i y \}|$ for all $y \in X \backslash \{x\}$, then $F(\mathcal{N}) = F(\mathcal{N}|_{X \backslash \{x\}})$, where $\mathcal{N}|_{X \backslash \{x\}}=({X \backslash \{x\}}, A_{\gamma}|_{X \backslash \{x\}},\gamma|_{X\backslash \{x\}})$.
\end{itemize}
Since a Condorcet loser does not belong to the Schwartz set of $G_{\gamma}$, \textit{Schwartz-IIA} implies that the Schulze rule satisfies this property.

Next, consider an alternative $x$ that is Pareto-dominated by every other alternative. We call such an alternative a \textit{unanimously Pareto-dominated alternative}. Such an alternative may be strategically easy to introduce into a given election. A desirable property of a voting rule is that its outcome be independent of the existence of such an alternative. This property is defined as follows.

\begin{itemize}
\item[] \textbf{Independence of Unanimously Pareto-Dominated Alternatives}: If $y \succ_i x$ for all $y \in X \backslash \{x\}$ and $i \in N$, then $F(\mathcal{N}) = F(\mathcal{N}|_{X\setminus\{x\}})$, where $\mathcal{N}|_{X \backslash \{x\}}=({X \backslash \{x\}}, A_{\gamma}|_{X \backslash \{x\}},\gamma|_{X\backslash \{x\}})$.
\end{itemize}
By the same logic as for \textit{Independence of Condorcet Losers}, since a unanimously Pareto-dominated alternative does not belong to the Schwartz set of $G_{\gamma}$, \textit{Schwartz-IIA} implies that the Schulze rule satisfies this property.

This axiom should be compared with \textit{Independence of Pareto-Dominated Alternatives}, which has been used in various strands of the previous literature.\footnote{See, for example, \citet{Richelson1978}, \citet{Ching1996}, and \citet{Ozturk2020}. It is also comparable to \textit{Independence of Never-Approved Alternatives}, studied by \citet{BrandlPeters2022}, and \textit{Independence of Unanimous Losers}, studied by \citet{Kondratev2023}, although the definitions are slightly different. The corresponding axiom in \citet{BrandlPeters2022} is an independence condition concerning alternatives that receive no approval votes in approval voting, whereas \citet{Kondratev2023} defines the axiom for social preference functions.}
Whereas \textit{Independence of Pareto-Dominated Alternatives} requires that the existence of any alternative Pareto-dominated by some other alternative not alter the winner set, our axiom requires invariance only with respect to alternatives Pareto-dominated by all other alternatives. Note that \cite{doring2026river} shows that the Schulze rule violates the stronger version.

\subsection{Computational issues}\label{sec:computational}
Research on algorithms for finding the Schulze winner set has been intensively made until recently 
%%F (see \citealt{Schulze2025}). 
(see \citealt{Schulze2025}).  %%F
The Floyd-Warshall algorithm for all-pairs bottleneck problem has been  improved and now runs in ${\rm O}(m^{2.69})$ time to find  the Schulze winner set. 
Currently the best, but randomized, expected ${\rm O}(m^2\log^4m)$ time algorithm is also given in 
%%F \citet{SVX2021}.
%%G Sornat et al.~(2021).  %%F
\citet{SVX2021}.

Our \textbf{SuccessiveDicut} method described in Section~\ref{sec:successiveDicut} is a prototype.\footnote{Using the 
sophisticated technique given in \cite{BernsteinGS2021} for updating decremental strongly-connected components, we can perform {\sf Procedure} {\bf SuccessiveDicut} in ${\rm O}(m^{2+\tfrac{2}{3}+{\rm o}(1)})$ time.} 
We can implement the {\bf SuccessiveDicut} method in ${\rm O}(m^3\log m)$ time in a primitive way as follows.
First, we sort the values of $\gamma(u,v)$ $((u,v)\in A)$ and 
let the distinct values of $\gamma(u,v)$ $((u,v)\in A)$ be given by
\begin{equation}\label{eq:comp1}
 \gamma^{(1)}< \gamma^{(2)}< \cdots < \gamma^{(k)}.
\end{equation}
We try to find an integer $\ell$ with $1\le \ell \le k$ such that
\begin{itemize}
\item removing all arcs $(u,v)$ satisfying $\gamma(u,v)< \gamma^{(\ell)}$
from the graph $G$, the graph remains strongly connected, while removing all arcs $(u,v)$ satisfying $\gamma(u,v)\le \gamma^{(\ell)}$ from the graph $G$, the graph becomes not strongly connected.
\end{itemize}
We can find such an integer $\ell$, if any exists, by a binary search in  list (\ref{eq:comp1}) in ${\rm O}(m^2\log m)$ time, where ${\rm O}(m^2)$ is for the strong connectivity checking and ${\rm O}(\log m)$ is for the binary search. Repeating this procedure for updated graph $H$, we can perform 
the {\bf SuccessiveDicut} method in ${\rm O}(m^3\log m)$ time, since the number of such integers $\ell$s is less than $m$ because of repeated deletion of weaker candidates in {\bf Step 2b}.

Moreover, we can start our prototype procedure \textbf{SuccessiveDicut}
by adding preprocessing {\bf Step 0} and modifying {\bf Step 1} as follows.
(Here we assume that there exists $a\in A$ with $\gamma(a)< n/2$, 
excluding the special case when $\gamma(a)= n/2$ for all $a\in A$.)
\medskip\\
%%%%%%%%%%%%%%%%%%%%%%%% New Step 1 and Modified Step 1 %%%%%%%%%%%%%%%%%%%%
{\bf Step 0}: Compute 
    $\gamma*=\max\{\gamma(a)\mid a\in A,\; \gamma(a)< n/2\}$;\\
\hspace*{1em} $B^*\gets \{a \in A\mid \gamma(a)\le \gamma^*\}$;\\
\hspace*{1em} Let $G^*=(V,A\setminus B^*)$;\\
{\bf Step 1}: Let $H=(U,B)$ be the maximal strongly connected component 
of $G^*$;
%%%%%%%%%%%%%%%%%%%%%%%% New Step 1 and Modified Step 1 %%%%%%%%%%%%%%%%%%%%
\medskip\\
The present modification is valid, since every dicut of a current $H$ computed
by an execution of {\bf Step~2a} of the prototype 
procedure \textbf{SuccessiveDicut} while $\gamma^*$ computed at {\bf Step~2a}
satisfies $\gamma^*< n/2$, is also a dicut of $G^*$ obtained by 
{\bf Step~0} given above.

\section*{Acknowledgements} %% (Added below)

S.~Fujishige's research was supported by JSPS KAKENHI Grant Number 
JP22K11922 and by JST ERATO Grant Number JPMJER2301, 
and also by the Research Institute for Mathematical Sciences, an International Joint Usage/Research Center located in Kyoto University.
Satoshi Nakada's research was supported by JSPS KAKENHI Grant Number 25K16606 and 25K00618.

%%F 

%%%%%%%%%%%%%
%%%%%%%%%%%%%APPENDIX
%%%%%%%%%%%%%

\begin{center}
\Large{{\bf Appendix}}
\end{center}

\numberwithin{definition}{section}
\numberwithin{theorem}{section}
\numberwithin{lemma}{section}
\numberwithin{proposition}{section}
\numberwithin{corollary}{section}
\numberwithin{example}{section}
\renewcommand{\theequation}{\thesection.\arabic{equation}}
\appendix

\section{Toolkit from graph theory}\label{appendix: graph}

A directed graph is a pair $G=(V,A)$, where $V$ is the finite set of vertices and $A\subseteq V\times V$ is the set of arcs.
We denote $(u,v)\in A$ to mean that the arc $(u,v)$ connects $u$ and $v$ in direction from $u$ to $v$. 
For each vertex $v\in V$ let $R(v)$ be the set of vertices that can be reached by directed paths starting from $v$. 
By definition, we have $v\in R(v)\subseteq V$ for any $v \in V$. 
The sets $(R(v))_{v \in V}$ induce an equivalence relation $\simeq$ on $V$ in such a way that, for each $u, v\in V$, $u\simeq v$ if and only if $u\in R(v)$ and $v\in R(u)$. 
Let $(U_i)_{i\in I}$ be the equivalence classes induced by $\simeq$. 
Since any $u,v \in U_i$ for each $i \in I$ is connected and any $u \in U_i, v \in U_j$ with $i \neq j$ is not connected, $H_i=(U_i, A|_{U_i})$ is a {\it strongly connected component} of $G$.
If $|I|=1$, we say that $G$ is {\it strongly connected}.

We can define a partial order $\preceq$ on the set  $\mathcal{H}(G)=\{H_i\mid i\in I\}$ such that $H_i\preceq H_j$ if and only if $u\in R(v)$ for every $u\in U_i$ and $v\in U_j$.
Let  $(\mathcal{H}(G),\preceq)$ be the induced partially ordered set (poset).
A strongly connected component $H_i$ is called a {\it maximal} strongly connected component of $G$ if there exists no strongly connected component $H_j\neq H_i$ such that $H_i\prec H_j$.
Let $\mathcal{H}^*(G) = \{H_i \in \mathcal{H}(G) \mid H_i \text{ is maximal}\}$ be the set of all maximal strongly connected components.
It is well-known that the decomposition of a graph $G=(V,A)$ into a collection of strongly connected components and that of maximal strongly connected components in linear time, i.e., ${\rm O}(|V|+|A|)$ time.

\begin{proposition}\label{prop: poset}
Suppose $G=(V,A)$ is a graph such that for every pair of distinct $u,v \in V$ there exists at least one of arcs $(u,v)$ and $(v,u)$ in $A$. Let $H_i = (U_i, B_i)(i \in I)$ be the strongly connected components of $G$ and $(\mathcal{H}(G),\preceq)$ be the induced poset. Then, the partial order $\preceq$ is a linear order.
\end{proposition}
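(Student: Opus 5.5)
The plan is to show that the reachability preorder of $G$ is total, so the induced partial order on strongly connected components is total. The argument uses the definition of $\preceq$ from this appendix directly, together with the hypothesis that every pair of distinct vertices is joined by at least one arc.

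First I check that $\preceq$ is well defined as a relation between components. If $u\in R(v)$ for one pair $u\in U_i$, $v\in U_j$, then $u'\in R(v')$ for every $u'\in U_i$ and $v'\in U_j$. Indeed, $v'$ reaches $v$ because both lie in $U_j$, $v$ reaches $u$ by assumption, and $u$ reaches $u'$ because both lie in $U_i$. Concatenating these walks, and shortening the result to a directed path if necessary, gives a path from $v'$ to $u'$. So $H_i\preceq H_j$ holds as soon as a single vertex of $U_j$ reaches a single vertex of $U_i$. Reflexivity and transitivity follow directly from the corresponding properties of reachability. Antisymmetry holds because $H_i\preceq H_j$ and $H_j\preceq H_i$ mean that vertices of $U_i$ and $U_j$ reach each other, so they lie in the same $\simeq$-class and $i=j$.

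The key step is totality. Take distinct components $H_i$ and $H_j$ and pick $u\in U_i$, $v\in U_j$; then $u\neq v$. By hypothesis, $(u,v)\in A$ or $(v,u)\in A$. If $(v,u)\in A$, then $u\in R(v)$, and the well-definedness observation gives $H_i\preceq H_j$. If $(u,v)\in A$, the same reasoning gives $H_j\preceq H_i$. Hence any two components are comparable, and $\preceq$ is a linear order.

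The only real obstacle is the well-definedness step, that is, passing from a single arc between representatives to the "for every $u\in U_i$ and $v\in U_j$" form of the definition. That step reduces to strong connectivity inside each component together with transitivity of reachability. As a remark, the hypothesis applies to $G_\gamma$: since $\gamma(x,y)+\gamma(y,x)=n>0$ by \eqref{eq:v1a}, at least one of $(x,y)$ and $(y,x)$ lies in $A_\gamma$. It also applies to the current graphs in Case 1 of Lemma~\ref{lemma:discarding}, which is how the proposition is used there.
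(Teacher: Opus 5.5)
Your proof is correct and uses the same idea as the paper's: an arc between representatives of two distinct components forces them to be comparable. The paper argues by contradiction, saying incomparable components would admit no arc between any $u\in U_i$ and $v\in U_j$, whereas you argue directly. Your well-definedness step, which extends reachability from one representative pair to all pairs using strong connectivity inside each component, is exactly what the paper's one-line argument uses without stating it.
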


\begin{proof}
Suppose to the contrary that there exists a non-comparable pair of distinct $H_i = (U_i, B_i)$ and $H'_i = (U_i', B'_i)$ in $H(G)$ with respect to $\preceq$. Then, for any $u \in U_i$ and $v \in U_i'$, there exists neither $(u,v) \in A$ nor $(v,u) \in A$, a contradiction.
\end{proof}

For any nonempty $U\subset V$, let $\Delta^+(U)$ be the set of all arcs $(u,v)\in A$ such that $u\in U$ and $v\in V\setminus U$. Also define $\Delta^-(U)$ to be the set of all arcs $(u,v)\in A$ such that $u\in V\setminus U$ and $v\in U$. 
A graph $G=(V,A)$ is called {\it connected} if for every nonempty proper subset $U\subset V$, at least one of $\Delta^+(U)$ and $\Delta^-(U)$ is nonempty.
If $\Delta^+(U)\neq\emptyset$ and $\Delta^-(U)=\emptyset$, we call $\Delta^+(U)$ a {\it directed cut} (or simply {\it dicut}) from $U$ to $V\setminus U$. 
Note that a connected graph is strongly connected if and only if there exists no dicut.
We can characterize maximal strongly connected components by the notion of dicut.

\begin{lemma}\label{lemma_maximal}
A strongly connected component $H_i = (U_i, B_i)$ is maximal if and only if $\Delta^-(U_i) = \emptyset$.
\end{lemma}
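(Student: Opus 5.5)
The plan is to prove both directions directly from the definitions of $\preceq$ and $\Delta^-$. The key observation is that for distinct strongly connected components $H_i\neq H_j$, the relation $H_i\prec H_j$ holds exactly when the vertices of $U_i$ are reachable from the vertices of $U_j$. We will show that this occurs for some $j$ if and only if some arc enters $U_i$ from outside.

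\emph{($\Leftarrow$)} Assume $\Delta^-(U_i)=\emptyset$, and suppose for contradiction that some $H_j\neq H_i$ satisfies $H_i\prec H_j$. Pick $v\in U_j$ and $u\in U_i$; by definition $u\in R(v)$, so there is a directed path from $v\notin U_i$ to $u\in U_i$. Along this path, let $(w,z)$ be the first arc whose head lies in $U_i$. Its tail $w$ lies in $V\setminus U_i$, so $(w,z)\in\Delta^-(U_i)$, a contradiction. Hence $H_i$ is maximal.

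\emph{($\Rightarrow$)} Assume $(w,z)\in\Delta^-(U_i)$ with $w\in U_j$, $j\neq i$, and $z\in U_i$. For any $v\in U_j$ and $u\in U_i$, strong connectivity inside the classes gives a path from $v$ to $w$ in $U_j$, then the arc $(w,z)$, then a path from $z$ to $u$ in $U_i$. So $u\in R(v)$ for all such pairs, which means $H_i\preceq H_j$. Since $H_i\neq H_j$, the relation is strict, so $H_i$ is not maximal. Strictness holds because if also $H_j\preceq H_i$, then $w$ and $z$ would be mutually reachable and hence lie in the same $\simeq$-class, contradicting $U_i\neq U_j$.

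The only point needing care is this use of antisymmetry, which guarantees that $\preceq$ is a genuine partial order on the components so that $\prec$ makes sense. It follows immediately from the definition of the classes $U_i$ as equivalence classes of mutual reachability. The remaining steps are routine path concatenation.
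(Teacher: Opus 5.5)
Your proof is correct and takes essentially the same route as the paper's (which appears only in a deleted block of the source). The ``if'' direction takes the first arc entering $U_i$ along a path from a vertex of $U_j$, and the ``only if'' direction concatenates paths through an entering arc $(w,z)$ to get $H_i\preceq H_j$, with strictness because $H_j\preceq H_i$ would make $w$ and $z$ mutually reachable; the only cosmetic difference is that you phrase the latter as a contrapositive rather than as a contradiction with maximality.
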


\Delete{  %%F 

\begin{proof}
We first show the only if direction.\\
Suppose $H_i$ is maximal and assume for contradiction that $\Delta^-(U_i) \neq \emptyset$. Then there exists an arc $(u, v) \in A$ with $u \in V \setminus U_i$ and $v \in U_i$. Let $H_j$ be the strongly connected component containing $u$, so $u \in U_j$ with $j \neq i$. Since $(u, v) \in A$, we have $v \in R(u)$. Moreover, since $H_i$ is strongly connected, for every $w \in U_i$ we have $w \in R(v)$ and hence $w \in R(u)$ by transitivity.
Hence, every vertex in $U_i$ is reachable from every vertex in $U_j$.

We now show that $H_i \preceq H_j$, by showing $u' \in R(v')$ for every $u' \in U_i$ and $v' \in U_j$.
Since $H_j$ is strongly connected, $u \in R(v')$ for every $v' \in U_j$.
Combined with $u' \in R(u)$, transitivity gives $u' \in R(v')$.
Hence $H_i \preceq H_j$.

We also have $H_i \neq H_j$, and $H_j \not\preceq H_i$ would give $H_i \prec H_j$.
To see that $H_j \not\preceq H_i$, suppose $H_j \preceq H_i$.
Then for every $v' \in U_j$ and $w \in U_i$ we would have $v' \in R(w)$, so in particular $u \in R(v)$ and $v \in R(u)$, giving $u \simeq v$.
But $u \in U_j$ and $v \in U_i$ with $i \neq j$, contradicts that different equivalent classes are disjoint.
Hence $H_j \not\preceq H_i$, so $H_i \prec H_j$, and we contradict the maximality of $H_i$.

We now show the if direction.
Suppose $\Delta^-(U_i) = \emptyset$ and assume for contradiction that $H_i$ is not maximal.
Then there exists $H_j \neq H_i$ with $H_i \prec H_j$.
By $H_i \preceq H_j$, for every $u \in U_i$ and $v \in U_j$ we have $u \in R(v)$.
In particular, fix some $v_0 \in U_j$ and $u_0 \in U_i$.
Then $u_0 \in R(v_0)$, so there is a directed path from $v_0$ to $u_0$.

Since $H_j \neq H_i$, we have $v_0 \notin U_i$.
Consider the directed path $v_0 = w_0, w_1, \ldots, w_k = u_0$ from $v_0$ to $u_0$.
Let $w_l$ be the first vertex on this path that belongs to $U_i$.
Such vertex exists, because $u_0 \in U_i$.
Then $l \geq 1$ and $w_{l-1} \notin U_i$, so $(w_{l-1}, w_l) \in \Delta^-(U_i)$. This contradicts $\Delta^-(U_i) = \emptyset$.
\end{proof}

} %%F (Delete)

\begin{lemma}\label{lemma_no-arcs}
Let $H_i, H_j \in \mathcal{H}^*(G)$ be distinct maximal strongly connected components.
Then there exists no arc between $U_i$ and $U_j$:
\[
    A \cap \bigl((U_i \times U_j) \cup (U_j \times U_i)\bigr) = \emptyset.
\]
\end{lemma}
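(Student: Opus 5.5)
The plan is to argue by contradiction, using Lemma~\ref{lemma_maximal} to convert maximality into the absence of entering arcs. Suppose there is an arc between $U_i$ and $U_j$. By symmetry of the roles of $i$ and $j$, we may assume it is an arc $(u,v)\in A$ with $u\in U_i$ and $v\in U_j$; the case of an arc from $U_j$ to $U_i$ is identical after swapping indices.

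Since $H_i$ and $H_j$ are distinct strongly connected components, their vertex sets are distinct equivalence classes of $\simeq$, so $U_i\cap U_j=\emptyset$. In particular $u\in V\setminus U_j$ and $v\in U_j$, so $(u,v)\in\Delta^-(U_j)$. But $H_j$ is maximal, so Lemma~\ref{lemma_maximal} gives $\Delta^-(U_j)=\emptyset$, a contradiction. Hence $A\cap(U_i\times U_j)=\emptyset$. The same argument applied to $H_i$ gives $A\cap(U_j\times U_i)=\emptyset$, and the two together yield the displayed identity.

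If one prefers not to rely on Lemma~\ref{lemma_maximal}, whose proof is suppressed in the text, the argument can be made directly from the definition of the poset. Given $(u,v)$ with $u\in U_i$ and $v\in U_j$, strong connectivity of $H_i$ and $H_j$ gives: every $w\in U_j$ is reachable from $v$, hence from $u$, hence from every $w'\in U_i$. Thus $H_j\preceq H_i$. Since the components are distinct, $\preceq$ restricted to them cannot hold in both directions: that would merge the equivalence classes. So $H_j\prec H_i$, contradicting the maximality of $H_j$.

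The only delicate point is this last step, checking that $H_j\preceq H_i$ together with $H_j\neq H_i$ yields strict $H_j\prec H_i$, i.e., that antisymmetry holds. It follows because mutual reachability would put vertices of $U_i$ and $U_j$ in the same $\simeq$-class. Everything else is routine.
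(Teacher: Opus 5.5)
Your proof is correct and is essentially the paper's own argument: an arc from $U_i$ to $U_j$ lies in $\Delta^-(U_j)$, which contradicts Lemma~\ref{lemma_maximal} for the maximal component $H_j$, and the reverse direction is symmetric. Your alternative argument directly from the definition of $\preceq$ is also sound, but it is not needed.
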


\Delete{ %%F

\begin{proof}
If there were an arc $(u, v) \in A$ with $u \in U_i$ and $v \in U_j$, then $(u,v) \in \Delta^-(U_j)$, and we contradict Lemma~\ref{lemma_maximal}.
The case of an arc from $U_j$ to $U_i$ contradicts $\Delta^-(U_i) = \emptyset$ symmetrically.
\end{proof}

} %%F (Delete)

\section{Omitted proofs}\label{appendix:proof}

\begin{proof}[Proof of Proposition \ref{prop:schulze_mar}]
Notice that $F^{Schulze}(\mathcal{N})$ depends only on the relative size of $\tau^\gamma_{G_\gamma}(x,y)$ and $\tau^\gamma_{G_\gamma}(y,x)$, and $F^{Schulze}(\mathcal{N}_M)$ only on the relative size of $\tau^{\mu}_{G_\mu}(x,y)$ and $\tau^{\mu}_{G_\mu}(y,x)$ for all alternative pair $\{x,y\}$.

We first show $\tau^\gamma_{G_\gamma}(x,y) \ge \tau^\gamma_{G_\gamma}(y,x) \implies \tau^{\mu}_{G_\mu}(x,y) \ge \tau^{\mu}_{G_\mu}(y,x) \; \text{for all distinct} \; x,y \in X$.
We prove by the following two cases:\\

\noindent \underline{Case 1-1: $\max \{\tau^\gamma_{G_\gamma}(x,y) , \tau^\gamma_{G_\gamma}(y,x)\} > n/2$}.
Without loss of generality, suppose $\tau^\gamma_{G_\gamma}(x,y)\ge \tau^\gamma_{G_\gamma}(y,x)$.
Then, $\tau^{\gamma}_{G_\gamma}(x,y) > n/2$ implies $2\tau^{\gamma}_{G_\gamma}(x,y) -n = \tau^{\mu}_{G_\mu}(x,y) > 0$.
Since $\tau^{\gamma}_{G_\gamma}(y,x)$ is either less or equal than $n/2$ which results in $\Pi(y,x) =\emptyset$ in $\mathcal{N}_M$, or $n/2 < \tau^{\gamma}_{G_\gamma}(y,x) \le \tau^{\gamma}_{G_\gamma}(x,y)$, in both cases we have $\tau^{\mu}_{G_\mu}(x,y) \ge \tau^{\mu}_{G_\mu}(y,x)$.\\

\noindent \underline{Case 1-2: $\max \{\tau^\gamma_{G_\gamma}(x,y) , \tau^\gamma_{G_\gamma}(y,x)\} \le n/2$}.
Since for any alternative pair $\{x,y\}$ we must have either $\gamma(x,y) \ge n/2$ or $\gamma(y,x) \ge n/2$,
this implies $\tau^\gamma_{G_\gamma}(x,y) = \tau^\gamma_{G_\gamma}(y,x) = n/2$.
Then, $\Pi(x,y) = \Pi(y,x) = \emptyset$ in $\mathcal{N}_M$, and we have $\tau^\mu_{G_\mu}(x,y) = \tau^\mu_{G_\mu}(y,x) = 0$.
\\

Next, we show $\tau^\mu_{G_\mu}(x,y) \ge \tau^\mu_{G_\mu}(y,x) \implies \tau^\gamma_{G_\gamma}(x,y) \ge \tau^\gamma_{G_\gamma}(y,x)\; \text{for all distinct} \; x,y \in X$. We prove by the following two cases:\\

\noindent \underline{Case 2-1: $\max\{\tau^\mu_{G_\mu}(x,y), \tau^\mu_{G_\mu}(y,x)\} > 0$}.
Without loss of generality, suppose $\tau^\mu_{G_\mu}(x,y) \ge \tau^\mu_{G_\mu}(y,x)$.
This directly implies $\tau^\gamma_{G_\gamma}(x,y) \ge \tau^\gamma_{G_\gamma}(y,x)$.
\\

\noindent \underline{Case 2-2: $\max\{\tau^\mu_{G_\mu}(x,y), \tau^\mu_{G_\mu}(y,x)\} \le 0$}.
Since we either have $\gamma(x,y) \ge n/2$ or $\gamma(y,x) \ge n/2$, we must have $\gamma(x,y) = \gamma(y,x) = 0$ which implies $\tau^\gamma_{G_\gamma}(x,y) = \tau^\gamma_{G_\gamma}(y,x) = n/2$.
Thus, $\tau^\gamma_{G_\gamma}(x,y) \ge \tau^\gamma_{G_\gamma}$ holds.
\\

By the above-mentioned cases, we have shown
\[
\tau^\gamma_{G_\gamma}(x,y) \ge \tau^\gamma_{G_\gamma}(y,x) \iff \tau^\mu_{G_\mu}(x,y) \ge \tau^{\mu}_{G_\mu}(y,x) \; \text{for all distinct} \; x,y \in X,
\]
which implies that $\mathcal{N}$ and $\mathcal{N}_M$ preserve the relative size relation of the bottle-neck path weight.
Hence, we obtain $F^{Schulze}(\mathcal{N}) = F^{Schulze}(\mathcal{N}_M)$.
\end{proof}

\begin{proof}[Proof of Proposition~\ref{prop: SD_mar}]\label{proof:prop2}
Denote by $U^*$ the direct sum of the maximal strongly connected components of $\mathcal{N}_M$.
Then, Lemma~\ref{lemma_maximal} implies $\Delta^-(U^*) = \emptyset$.
Hence, for any $x \in U^*$ and $y \in V \backslash U^*$we have
\[
\tau^{\mu}_{G_\mu}(x,y) \ge \tau^{\mu}_{G_\mu}(y,x) = 0.
\]
Thus, the 
%%F schulze 
Schulze  %%F 
winner set $F^{Schulze}(\mathcal{N}_M)$ is a subset of $U^*$.
Moreover, since the schulze winner remains a schulze winner in an induced subgraph containing itself, applying the same logic in the proof of Theorem~\ref{thm:equivalence}, we obtain 
\begin{equation}\label{eq:mar}
F^{SD}(\mathcal{N}_M) = F^{Schulze}(\mathcal{N}_M).
\end{equation}
Combining to (\ref{eq:mar}) the results from Proposition~\ref{prop:schulze_mar} and Theorem~\ref{thm:equivalence} yields
\[
F^{SD}(\mathcal{N}) = F^{Schulze}(\mathcal{N}) = F^{Schulze}(\mathcal{N}_M) = F^{SD}(\mathcal{N_M}),
\]
which completes the proof.
\end{proof}

\bibliographystyle{ecta}
\bibliography{ref}

\end{document}